\newtheorem{theorem}{Theorem}[section]
\newtheorem{lemma}[theorem]{Lemma}
\newtheorem{claim}[theorem]{Claim}
\newtheorem{definition}[theorem]{Definition}
\newcommand{\cI}{\mathcal{I}}
\newcommand{\cA}{\mathcal{A}}
\newcommand{\cB}{\mathcal{B}}
\newcommand{\cD}{\mathcal{D}}
\newcommand{\cF}{\mathcal{F}}
\newcommand{\cS}{\mathcal{S}}
\newcommand{\cT}{\mathcal{T}}
\newcommand{\cU}{\mathcal{U}}
\newcommand{\cV}{\mathcal{V}}
\newcommand{\OPT}{\texttt{OPT}}
\newcommand{\opt}{\texttt{opt}}
\newcommand{\cost}{\texttt{cost}}
\newcommand{\cball}{\texttt{CBall}}
\newcommand{\fball}{\texttt{FBall}}
    \def \M {\mathcal{M}}
    \def \I {\mathcal{I}}
\newenvironment{fminipage}
  {\begin{Sbox}\begin{minipage}}
  {\end{minipage}\end{Sbox}\fbox{\TheSbox}}
\DeclareMathOperator*{\argmin}{arg\,min}
\title{Constant-Factor Approximation Algorithms for Socially Fair $k$-Clustering}
\author{Mehrdad Ghadiri \\ Georgia Tech \\ \texttt{ghadiri@gatech.edu} \And
Mohit Singh \\ Georgia Tech \\ \texttt{msingh94@gatech.edu} \And
Santosh S. Vempala  \\ Georgia Tech \\ \texttt{vempala@gatech.edu}}
\begin{document}

\maketitle

\begin{abstract}
We study approximation algorithms for the socially fair $(\ell_p, k)$-clustering problem with $m$ groups, whose special cases include the socially fair $k$-median ($p=1$) and socially fair $k$-means ($p=2$) problems. We present (1) a polynomial-time $(5+2\sqrt{6})^p$-approximation with at most $k+m$ centers (2) a $(5+2\sqrt{6}+\epsilon)^p$-approximation with $k$ centers in time $n^{2^{O(p)}\cdot m^2}$, and (3) a $(15+6\sqrt{6})^p$ approximation with $k$ centers in time $k^{m}\cdot\text{poly}(n)$. The first result is obtained via a refinement of the iterative rounding method using a sequence of linear programs. The latter two results are obtained by converting a solution with up to $k+m$ centers to one with $k$ centers using sparsification methods for (2) and via an exhaustive search for (3). We also compare the performance of our algorithms with existing bicriteria algorithms as well as exactly $k$ center approximation algorithms on benchmark datasets, and find that our algorithms also outperform existing methods in practice.
\end{abstract}

\section{Introduction}

Automated decision making using machine learning algorithms is being adopted in many aspects of society. The examples are innumerable and include applications with substantial societal effects such as automated content moderation \cite{gorwa2020algorithmic} and recidivism
prediction \cite{angwin2016machine}. This necessitates designing new machine learning algorithms that incorporate societal considerations, especially {\em fairness} \cite{dwork2012fairness,kearns2019ethical}.

The facility location problem is a basic and well-studied problem in combinatorial optimization. Famous instances include the $k$-means, $k$-median and $k$-center clustering problems, where the input is a finite metric and the goal is to find $k$ points (``centers" or ``facilities'') such that a function of the distances of each given point to its nearest center is minimized. For $k$-means, the objective is the average squared distance to the nearest center; for $k$-median, it is the average distance; and for $k$-center, it is the maximum distance. These problems can all be captured by the {\em $(\ell_p, k)$-clustering} problem, defined as follows: given a set of clients $\cA$ of size $n$, a set of candidate facility locations $\cF$, and a metric $d$, find a subset $F\subset \cF$ of size $k$ that minimizes $\sum_{i\in\cA} d(i,F)^p$, where $d(i,F) = \min_{j\in F} d(i,j)$. This problem is NP-hard for any $p$, and also hard to approximate~\cite{drineas2004clustering,guha1999greedy}. A $2^{O(p)}$-approximation algorithm was given by \cite{charikar2002constant}\footnote{In some other works, the $p$'th root of the objective is considered and therefore the approximation factors look different in such works.}.

Here we consider {\em socially fair} extensions of the $(\ell_p, k)$-clustering problem in which $m$ different (not necessarily disjoint) subgroups, $\cA=A_1\cup \cdots \cup A_m$, among the data are given, and the goal is to minimize the {\em maximum} cost over the groups, so that a common solution is not too expensive for any one of them. Each group can be a subset of the data or simply any nonnegative weighting.
The goal is to minimize the maximum weighted cost among the groups, i.e.,
\begin{equation}
\min_{F\subset \cF: |F|=k} ~ \max_{s\in [m]} \sum_{i\in A_s} w_s(i) d(i,F)^p.
\end{equation}
A weighting of $w_s(i) = \frac{1}{|A_s|}$, for all $i\in A_s$, corresponds to the average for each group. The groups usually arise from sensitive attributes such as race and gender (that are protected against discrimination under the Civil Rights Act of 1968 \cite{hutchinson201950,benthall2019racial}).
The cases of $p=1$ and $p=2$ are the socially fair $k$-median and $k$-means, respectively, introduced in \cite{GhadiriSV21,AbbasiBV21}. As discussed in \cite{GhadiriSV21}, the objective of the socially fair $k$-means promotes a more equitable average clustering cost among different groups.

The objective function of socially fair $k$-median problem was first studied by \cite{AnthonyGGN10} who gave an $O(\log m + \log n)$-approximation algorithm. Moreover, the existing constant-factor approximation algorithms for the vanilla $k$-means and $k$-median problems can be used to find $O(m)$-approximate solutions for the socially fair $k$-means and $k$-median problems \cite{GhadiriSV21,AbbasiBV21}. The proof technique also directly yields a $m\cdot 2^{O(p)}$-approximation algorithm for the socially fair $(\ell_p, k)$-clustering problem. The natural LP relaxation of the socially fair $k$-median problem has an integrality gap of $\Omega(m)$ \cite{AbbasiBV21}.

More recently, Makarychev and Vakilian strengthened the LP relaxation of the socially fair $(\ell_p, k)$-clustering problem by a sparsification technique
\cite{MakarychevV2021}. 
Their stronger LP has an integrality gap of $\Omega(\frac{\log m}{\log \log m})$ and their rounding algorithm (similar to that of \cite{charikar2002constant}) finds a $(2^{O(p)} \frac{\log m}{\log \log m})$-approximation algorithm for the socially fair $(\ell_p, k)$-clustering problem. For the socially fair $k$-median problem, this is asymptotically the best possible in polynomial time under the assumption $\textsc{NP}\not\nsubseteq \bigcap_{\delta>0}\textsc{DTIME}(2^{n^\delta})$ \cite{BhattacharyaCMN14}.
This hardness result is for the $k$-clustering problem, where the algorithm must produce a solution with at most $k$ centers. It is natural to consider a {\em bicriteria} approximation, which allows for more centers whose total cost is close to the optimal cost for $k$ centers. For the socially fair $k$-median and $0 < \epsilon < 1$, \cite{AbbasiBV21} presents an algorithm that gives at most $k/(1-\epsilon)$ centers with objective value at most $2^{O(p)}/\epsilon$ times the optimum for $k$ centers.

Our first result is an improved bicriteria approximation algorithm for the socially fair $\ell_p$ clustering problem with only $m$ additional centers ($m$ is usually a small constant).
\begin{restatable}{theorem}{BicriteriaThm}
\label{thm:bicriteria}
For any $\epsilon > 0$, there is a polynomial-time bicriteria approximation algorithm for the socially fair $(\ell_p, k)$-clustering problem with $m$ groups that finds a solution with at most $k+m$ centers of cost at most $(5+2\sqrt{6})^p\approx 9.9^p$ times the optimal cost for a solution with $k$ centers.
\end{restatable}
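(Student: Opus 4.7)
The plan is to combine iterative LP rounding with a Markov-style filtering step tailored to the multi-group cost objective. Start with the natural LP relaxation: facility variables $y_j\in[0,1]$, assignment variables $x_{ij}\in[0,1]$, assignment constraints $\sum_j x_{ij}=1$, capacity constraints $x_{ij}\le y_j$, the facility budget $\sum_j y_j\le k$, and the $m$ per-group cost constraints $\sum_{i\in A_s}w_s(i)\sum_j x_{ij}d(i,j)^p\le T$ for $s\in[m]$; guess $T=\OPT$ by binary search. Let $t_i := \bigl(\sum_j x_{ij}d(i,j)^p\bigr)^{1/p}$ denote client $i$'s fractional radius. For a parameter $\alpha>1$, Markov's inequality shows that the ball $B_i$ of radius $\alpha t_i$ around $i$ carries fractional mass at least $1-1/\alpha^p$; re-normalizing the assignments and re-solving the LP restricted to these balls yields a filtered fractional solution with sparser support and only a bounded blow-up in each group's cost.

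Next, apply iterative rounding to the filtered LP. At any basic feasible solution, the rank lemma bounds the number of strictly fractional variables by the number of tight non-box constraints. After handling per-client structure via a bundling step (in which clusters of mutually close clients are represented by a single ``witness''), the remaining global constraints are the facility budget and the $m$ cost budgets, giving only $m+1$ dangerous constraints. Iteratively fixing integral variables, dropping satisfied constraints, and re-solving a new LP on the residual instance terminates at an extreme point whose fractional part can be rounded up by opening at most $m$ additional facilities, delivering the $k+m$ bound.

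To bound the cost blow-up by $(5+2\sqrt{6})^p$, observe that in the final rounded solution each client $i$ is either served by an open facility inside its own ball $B_i$ (contributing $p$-power cost at most $\alpha^p t_i^p$) or reassigned through a witness client $i'$ whose ball received an open facility. In the second case, the triangle inequality yields an actual distance of the form $\alpha t_i + c\,\alpha t_{i'}$ where $c$ is an absolute constant coming from the witness construction, and the bundling step guarantees $t_{i'}\le\gamma t_i$ for some reachability parameter $\gamma$. Raising to the $p$-th power and optimizing over $\alpha$ reduces the worst-case $p$-power ratio to a univariate minimization whose minimum is $5+2\sqrt{6}=(\sqrt{2}+\sqrt{3})^2$.

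The main technical obstacle I expect is the iterative rounding step. While rank-based counting is standard in iterative rounding for vanilla $k$-median, extending it to the socially fair setting requires a careful argument that the $m$ per-group cost constraints can be shed \emph{one at a time}, each paying at most a single extra facility, without simultaneously violating the other $m-1$ budgets. This delicate ``one constraint, one extra center'' bookkeeping is what keeps the additional center count at exactly $m$ and drives the refinement of the iterative rounding via a sequence of linear programs alluded to in the abstract.
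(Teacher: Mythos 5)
Your proposal has the right silhouette (LP relaxation, iterative rounding, a witness/triangle-inequality cost analysis, and a univariate optimization landing on $5+2\sqrt{6}=(\sqrt2+\sqrt3)^2$), but the two steps you flag as delicate are exactly where it diverges from a working argument. First, the ``shed the $m$ cost constraints one at a time, paying one extra center each'' bookkeeping is not how the extra $m$ centers arise, and it would not be sound: once you drop a group's budget constraint from the LP, nothing prevents the residual solution from violating that group's cost, and the whole point of the min-max objective is that all $m$ constraints must be active until the end. The paper never drops them. Instead, the iterative phase only restructures the \emph{client-side} constraints so that, at termination, the feasible region is a partition-matroid base polytope (disjoint facility sets $F_i$ for a set of representative clients, with $\sum_{j\in F_i}y_j=1$, plus $\sum_j y_j=k$) intersected with the $m$ group-cost constraints; Lemma~\ref{lem:matroid} then says any extreme point has support at most $k+m$, and one simply opens the entire support. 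The $+m$ comes from a support bound on a single final extreme point, not from a sequential shedding argument.

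Second, your filtering step is the Lin--Vitter/ABV route: Markov's inequality plus re-normalizing the mass inside the ball of radius $\alpha t_i$. Re-normalization is precisely what forces $k/(1-\epsilon)$ centers in \cite{AbbasiBV21}; it is incompatible with keeping the budget at $k$ plus an additive $m$. The paper avoids it by only requiring $\sum_{j\in B_i}y_j\le 1$ for non-representative clients and charging the shortfall $(1-y(B_i))D_i^p$ directly in the LP objective, then geometrically shrinking $D_i$ by a factor $(1+\lambda)$ each time that constraint goes tight. This shrinking is also what produces the constant: a client's reassignment follows a \emph{chain} of witnesses whose radii decrease geometrically by $(1+\lambda)$ (enforced by rounding all distances to powers of $1+\lambda$), so the total detour is $D_{i'}+2\sum_{j\ge0}D_{i'}/(1+\lambda)^j\le(1+\tfrac{2(1+\lambda)}{\lambda})D_{i'}$, and multiplying by the $(1+\lambda)$ rounding loss and minimizing $(1+\tfrac{2(1+\lambda)}{\lambda})(1+\lambda)=5+3\lambda+\tfrac2\lambda$ at $\lambda=\sqrt{2/3}$ gives $5+2\sqrt6$. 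Your single-witness picture with unspecified constants $c$ and $\gamma$ asserts this number rather than deriving it; without the geometric chain and the distance-rounding trade-off there is no two-term expression to optimize.
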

Goyal and Jaiswal~\cite{goyal2021fpt} show that a solution to the socially fair $(\ell_p, k)$-clustering problem with $k'>k$ centers and cost $C$ can be converted to a solution with $k$ centers and cost at most $3^{p-1}(C+2\opt)$ by simply taking the $k$-subset of the $k'$ centers of lowest cost. A proof of this is included in the appendix for completeness. We improve this factor using a sparsification technique.  
\begin{theorem}
\label{thm:exact}
For any $\epsilon > 0$, there is a $(5+2\sqrt{6}+\epsilon)^p$-approximation algorithm for the socially fair $(\ell_p, k)$-clustering problem that runs in time $n^{{2^{O(p)} m^2}}$; there is a $(15+6\sqrt{6})^{p}$-approximation algorithm that runs in time $k^m \cdot \text{poly}(n)$.
\end{theorem}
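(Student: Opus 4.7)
Both algorithms begin by invoking Theorem~\ref{thm:bicriteria} to produce a set $F'$ of $k+m$ centers whose socially fair cost is at most $\alpha^p \cdot \OPT$, where $\alpha := 5 + 2\sqrt{6}$ and $\OPT$ is the optimum for a solution with exactly $k$ centers. The remaining task, in either case, is to extract a $k$-subset of centers from (or near) $F'$ while paying a controlled additional multiplicative factor; the two algorithms differ only in how aggressively they search.

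For the $k^m \cdot \text{poly}(n)$-time, $(15 + 6\sqrt{6})^p$-approximation I would simply enumerate every $k$-subset $F \subseteq F'$. There are $\binom{k+m}{m} \leq (k+m)^m$ such subsets, and the socially fair cost of each can be evaluated in $O(nm)$ time by assigning every client to its nearest center in $F$. The total running time is therefore $k^m \cdot \text{poly}(n)$ in the regime where $m \le k$. For correctness, I would invoke the Goyal--Jaiswal reduction already cited in the paper: the best $k$-subset of $F'$ has socially fair cost at most $3^{p-1}(C + 2\,\OPT)$, where $C \le \alpha^p \OPT$ is the bicriteria cost. Since $\alpha > 1$, we have $\alpha^p + 2 \le 3\alpha^p$, so the best subset's cost is at most $3^{p-1} \cdot 3\alpha^p \cdot \OPT = (3\alpha)^p \cdot \OPT = (15 + 6\sqrt{6})^p \cdot \OPT$, as claimed.

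For the $(5 + 2\sqrt{6} + \epsilon)^p$-approximation in time $n^{2^{O(p)} m^2}$, a pure subset-of-$F'$ search is too lossy (it always incurs the $3^p$ factor above), so I would augment the bicriteria solution with a sparsification-based enumeration. At a high level: after computing $F'$, apply a sparsification (in the spirit of Makarychev--Vakilian) that identifies, across the $m$ groups, a collection of $2^{O(p)} m^2$ \emph{representative} clients whose costs determine the socially fair cost of any candidate $k$-subset up to a $(1 + \epsilon/\alpha)$ multiplicative factor. Then exhaustively enumerate all $n^{2^{O(p)} m^2}$ ways of mapping these representatives to facility locations in $\cF$, and for each guess recover the remaining centers by an optimal residual assignment (via nearest-center rules or a routine LP). By the bicriteria bound composed with the sparsification guarantee, the best guess has cost at most $(1 + \epsilon/\alpha) \cdot \alpha^p \cdot \OPT$, which after rescaling $\epsilon$ is at most $(\alpha + \epsilon)^p \cdot \OPT = (5 + 2\sqrt{6} + \epsilon)^p \cdot \OPT$.

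The main obstacle is calibrating the sparsification in the second algorithm: one must show that only $2^{O(p)} m^2$ representative clients suffice to capture the socially fair cost of every feasible $k$-subset up to $(1 + \epsilon/\alpha)$, and that once the representatives and their target facilities are fixed, the residual assignment problem admits a polynomial-time solution. Pinning down the exponent $2^{O(p)} m^2$, rather than something like $m^{O(p)}$ or $(m/\epsilon)^{O(p)}$, requires a careful group-aware cost-accounting argument that cleanly separates the $2^{O(p)}$ factor arising from the $\ell_p$ triangle inequality from the quadratic dependence on $m$ forced by the max-over-groups objective. The $k^m$-time algorithm, by contrast, is essentially the Goyal--Jaiswal bound combined with routine enumeration once the bicriteria solution is in hand.
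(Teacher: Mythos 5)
Your second algorithm is exactly the paper's: run the bicriteria algorithm of Theorem~\ref{thm:bicriteria}, enumerate all $\binom{k+m}{k}\le (k+m)^m$ subsets of size $k$, and invoke Lemma~\ref{lemma:exhaustive_search} with $C\le (5+2\sqrt{6})^p\opt$; your inequality $3^{p-1}\left((5+2\sqrt{6})^p+2\right)\le 3^p(5+2\sqrt{6})^p$ is the right calculation, so that half is correct.

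The first algorithm has a genuine gap. Your plan is to compute the $(k+m)$-center solution first and then find $2^{O(p)}m^2$ representative \emph{clients} whose costs determine the cost of every candidate $k$-subset up to a $(1+\epsilon/\alpha)$ factor; you flag this as the missing piece, and it is not just uncalibrated but the wrong primitive. A client subset that certifies the cost of arbitrary $k$-center solutions up to $1+\epsilon$ is a coreset, and coreset sizes necessarily grow with $k$; a bound of $2^{O(p)}m^2$ independent of $k$ cannot work. The paper instead sparsifies the \emph{instance}, and does so \emph{before} the bicriteria step: a facility $j$ is dense if some group has client mass more than $\frac{\opt_\cI}{mt}\cdot\left(\frac{2}{3}d(j,\OPT_\cI)\right)^{-p}$ in the ball of radius $\frac{1}{3}d(j,\OPT_\cI)$ around it (Definition~\ref{def:sparse_instance}); Algorithm~\ref{alg:sparseInstance} enumerates $n^{O(m^2t)}$ guesses of at most $m^2t$ balls of dense facilities to delete, with $t=2^{O(p)}$, and a packing argument (Lemma~\ref{lemma:sparse}) shows some guess yields a $\frac{\opt_\cI}{mt}$-sparse instance with unchanged optimum. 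The bicriteria algorithm is run on \emph{each} sparsified instance, and the resulting pseudo-solution $\cT$ is converted to $k$ centers by Algorithm~\ref{alg:pseudotosolution}: greedily delete centers whose removal increases the cost by at most $\beta=\Theta\!\left(\frac{1}{mt}\left(\opt_\cI+2^{O(p)}\cost_\cI(\cT)\right)\right)$ (sparsity guarantees such cheap removals exist while more than $m^2t$ surviving centers are far from $\OPT_\cI$), then guess the at most $m^2t$ far centers' replacements from \emph{all} of $\cF$, and replace each remaining center $j$ by one facility from the small ball $\fball_{\cI}(j,\delta L_j)$ by solving a min--max selection problem over a partition matroid (Lemma~\ref{lemma:sub_problem_k_one_center}). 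Two features of this pipeline are absent from your sketch and are essential: (i) some of the final $k$ centers must come from the full facility set, not from or near $F'$, because a pseudo-solution center can be far from every optimal center; and (ii) the residual selection step is not a routine LP --- its relaxation is a matroid polytope intersected with $m$ half-spaces, and extracting an integral $(1+\epsilon)$-approximation needs the support bound of Lemma~\ref{lem:matroid} together with enumeration of the $O(m/\epsilon)$ heavy coordinates per group.
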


This raises the question of whether a faster constant-factor approximation is possible.
Goyal and Jaiswal \cite{goyal2021fpt} 
show that assuming Gap-Exponential Time Hypothesis (Gap-ETH\footnote{Informally Gap-ETH states that there is no $2^{o(n)}$-time algorithm to distinguish between a satisfiable formula and a formula that is not even $(1-\epsilon)$ satisfiable.}), it is hard to approximate socially fair $k$-median and $k$-means within factors of $1+2/e-\epsilon$ and $1+8/e-\epsilon$ respectively, in time $g(k)\cdot n^{f(m)\cdot o(k)}$, for $g:\mathbb{R}_{\geq 0} \rightarrow \mathbb{R}_{\geq 0}$ and $f:\mathbb{R}_{\geq 0} \rightarrow \mathbb{R}_{\geq 0}$;  socially fair $(\ell_p, k)$-clustering is hard to approximate within a factor of $3^p-\epsilon$ in time $g(k)\cdot n^{o(k)}$. They also give a $(3+\epsilon)^p$-approximation with time complexity $\left(\frac{k}{\epsilon}\right)^{O(k)} \text{poly}(\frac{n}{\epsilon})$. This leaves open the possibility of a constant-factor approximation algorithm that runs in time $f(m)\text{poly}(n,k)$.

For the case of $p\rightarrow \infty$, the problem reduces to fair $k$-center problem if we take $p$\textsuperscript{th} root of the objective. The problem is much better understood and widely studied along with many generalization~\cite{jia2021fair,anegg2021technique,MakarychevV2021}. Makarychev et al.'s result ~\cite{MakarychevV2021} implies an $O(1)$-approximation in this case.

We compare the performance of our bicriteria algorithm against \cite{AbbasiBV21} and our algorithm with exactly $k$ centers against \cite{MakarychevV2021} on three different benchmark datasets that are widely used in fair clustering literature. Our experiments show that our algorithms consistently outperform these in practice (see Section \ref{sec:empirical}) and the number of centers that our algorithm selects is often less than that of selected by algorithm of \cite{AbbasiBV21} (see Section \ref{sec:app-num-cens}).

\subsection{Approach and Techniques}

Our starting point is a linear programming (LP) relaxation of the problem. 
The integrality gap of the natural LP relaxation of the problem is $m$ \cite{AbbasiBV21}. 
For our bicriteria approximation, we use an iterative rounding procedure, inspired by \cite{krishnaswamy2018constant}. 
In each iteration, we solve an LP whose constraints change from one iteration to the next. We show that the feasible region of the final LP is the intersection of a matroid polytope and $m$ affine spaces. This implies that the size of the support of an optimal extreme solution of this LP is at most $k+m$ --- see Lemma \ref{lem:matroid}. Rounding up all of these fractional variables results in a solution with $k+m$ centers.

There are two approaches to convert a solution with up to $k+m$ centers to a solution with $k$ centers. The first is to take the best $k$-subset of the $k+m$ centers which results in a $(15+6\sqrt{6})^p$-approximation for an additional cost of $O(k^{m}n(k+m))$ in the running time. This follows from the work of ~\cite{goyal2021fpt}. For completeness, we include it as Lemma \ref{lemma:exhaustive_search} in the Appendix.
 
The second approach is to ``sparsify'' the given instance of the problem. We show if the instance is ``sparse,'' then the integrality gap of the LP is small. A similar idea was used by~\cite{li2016approximating} for the classic $k$-median problem. We extend this sparsification technique to the case of socially fair clustering. 
We defined an $\alpha$-sparse instance for the socially fair $k$-median problem as an instance in which for an optimum set of facilities $O$, any group $s\in[m]$ and any facility $i$, the number of clients of group $s$ in a ball of radius $\frac{1}{3}d(i, O)$ is less than $\frac{3}{2}\frac{\alpha |A_s|}{d(i, O)}$. For such an instance, given a set of facilities, replacing facility $i$ with the closest facility to $i$ in $O$ can only increase the total cost of the clients served by this client by a constant factor plus $2\alpha$. We show that if an instance is $O(\opt/m)$-sparse, then the integrality gap of the LP relaxation is constant.

For an $O(\opt/m)$-sparse instance of the socially fair  $k$-median problem, a solution with $k+m$ centers can be converted to a solution with $k$ centers in time $n^{O(m^2)}$ while increasing the objective value only by a constant factor. Our conversion algorithm is based on the fact that there are at most $O(m^2)$ facilities that are far from the facilities in the optimal solution. We enumerate candidates for these facilities 
and then solve an optimization problem for the facilities that are close to the facilities in the optimal solution. This optimization step is again over the intersection of the polytope of a matroid with $m$ half-spaces.
In summary, our algorithm consists of three main steps.
\begin{enumerate}
    \item We produce $n^{O(m^2)}$ instances of the problem such that at least one is $O(\frac{\opt}{m})$-sparse and its optimal objective value is equal to that of the original instance
    (Section \ref{sec:sparse}, Lemma \ref{lemma:sparse}).
    
    \item For each of the instances produced in the previous step, we find a \emph{pseudo-solution} with at most $k+m$ centers by an iterative rounding procedure (Section \ref{sec:bicriteria}, Lemma \ref{lem:fractional}).

    \item We convert each pseudo-solution with $k+m$ centers to a solution with $k$ centers
    (Section \ref{sec:pseudo_to_sol}, Lemma \ref{lemma:pseudo-to-sol}) and return the solution with the minimum cost.

\end{enumerate}

\subsection{Preliminaries}
We use terms \emph{centers} and \emph{facilities} interchangeably.  For sets $S_1,\ldots,S_k$, we denote their Cartesian product by $\bigotimes_{j\in[k]} S_j$, i.e., $(s_1,\ldots,s_k)\in \bigotimes_{j\in[k]} S_j$ if and only if $s_1\in S_1,\ldots,s_k\in S_k$. For an instance $\cI$ of the problem, we denote an optimal solution of $\cI$ and its objective value by $\OPT_{\cI}$ and $\opt_{\cI}$, respectively.  

A pair $\mathcal{M}=(E,\mathcal{I})$, where $\mathcal{I}$ is a non-empty family of subsets of $E$, is a matroid if: 1) for any $S\subseteq T\subseteq E$, if $T\in\mathcal{I}$ then $S\in\mathcal{I}$ (hereditary property); and 2) for any $S,T\in \mathcal{I}$, if $|S|<|T|$, then there exists $i\in T\setminus S$ such that $S+i\in \mathcal{I}$ (exchange property)~see \cite{schrijver2003combinatorial}. We call $\mathcal{I}$ the set of independent sets of the matroid $\mathcal{M}$. The basis of $\mathcal{M}$ are all the independent sets of $\mathcal{M}$ of maximal size. We use the following lemma in the analysis of both our bicriteria algorithm and the algorithm with exactly $k$ centers.

\begin{lemma}\label{lem:matroid}\cite{grandoni2014new}
    Let $\M=(E,\I)$ be a matroid with rank $k$ and $P(\M)$ denote the convex hull of all basis of $\M$. Let $Q$ be the intersection of $P(M)$ with $m$ additional affine constraints. Then any extreme point of $Q$ has a support of size at most $k+m$.
\end{lemma}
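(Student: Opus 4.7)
The plan is to combine the classical uncrossing technique for matroid polytopes with a dimension count against the $m$ extra affine constraints. Let $x^*$ be an extreme point of $Q$, let $T=\{e\in E: x^*_e>0\}$, and set $t=|T|$. I would consider the smallest face $F$ of $P(\M)$ containing $x^*$; it is cut out by the equalities $x_e=0$ for $e\notin T$ together with the rank inequalities that are tight at $x^*$. By submodularity of $r$, the family of tight sets is closed under union and intersection, so the standard uncrossing argument replaces it by a chain $\emptyset\neq S_1\subsetneq\cdots\subsetneq S_\ell=E$ whose characteristic vectors span the same linear space.

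Next, I would restrict attention to $\mathbb{R}^T$ by substituting $x_e=0$ for $e\notin T$. The tight rank equalities become $x(S_i\cap T)=r(S_i)$; after removing duplicates this yields a strict chain $A_1\subsetneq\cdots\subsetneq A_q=T$ with $x(A_j)=r(A_j)$ for each $j$. Since $x^*_e>0$ on all of $T$, the partial sums $x^*(A_j)=r(A_j)$ are strictly increasing positive integers bounded by $r(E)=k$, so $q\le k$. These $q$ equalities are linearly independent in $\mathbb{R}^T$, so the face satisfies $\dim F=t-q$.

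Finally, $x^*$ lies in the relative interior of $F$, so it must be an extreme point of $F\cap H$, where $H$ is the affine subspace defined by the $m$ affine constraints of $Q$. Locally at $x^*$ the face $F$ looks like an affine space of dimension $\dim F$, and intersecting with $H$ (codimension at most $m$) can drop the dimension by at most $m$. For this intersection to collapse to $\{x^*\}$ we therefore need $\dim F\le m$, i.e., $t-q\le m$. Combining with $q\le k$ yields $t\le k+m$, which is the desired bound on the support.

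The only non-routine step is the uncrossing claim, but it is standard: if $A,B$ are tight at $x^*$, then submodularity $r(A)+r(B)\ge r(A\cap B)+r(A\cup B)$ together with the identity $x^*(A)+x^*(B)=x^*(A\cap B)+x^*(A\cup B)$ and the matroid-polytope inequalities $x^*(A\cap B)\le r(A\cap B)$, $x^*(A\cup B)\le r(A\cup B)$ force both $A\cap B$ and $A\cup B$ to be tight as well. Iterating, one obtains a chain generating the same linear span as the original tight family.
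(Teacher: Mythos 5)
The paper does not prove this lemma---it is quoted from \cite{grandoni2014new} (see also the pointer to \cite{LRS}, Chap.~11)---and your argument is correct and is essentially the standard proof from those references: uncross the tight rank constraints into a chain, bound the chain length by $k$ via the strictly increasing integer values $x^*(A_j)$, and observe that the minimal face of $P(\M)$ containing $x^*$ must have dimension at most $m$ for the $m$ extra affine constraints to pin down an extreme point. Two trivial points to tidy: discard a possible empty set $S_i\cap T$ from the restricted chain (it contributes the zero vector), and if the $m$ affine constraints are inequalities, take $H$ to be the span of only those that are tight at $x^*$, which is all the dimension count needs.
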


\def \lpa {\textbf{LP1}}
\def \lpb {\textbf{LP2}}
\def \x {\tilde{x}}
\def \y{\tilde{y}}
\def \z{\tilde{z}}

\paragraph{Related work.}
Unsupervised learning under fairness constraints has received significant attention over the past decade. Social fairness (i.e., equitable cost for different demographic groups) has been considered for problems such as PCA \cite{samadi2018price,tantipongpipat2019multi}.\\
There are other notions of fairness that has been considered for clustering problem. The most notable ones are balance in clusters (i.e., equitable representation of demographic groups in clusters) \cite{chierichetti2017fair,abraham2019fairness,bera2019fair,ahmadian2019clustering}, balance in representation (i.e., equitable representation of demographic groups in selected centers) \cite{hajiaghayi2010budgeted,krishnaswamy2011matroid,kleindessner2019fair}, and individual fairness \cite{kleindessner2020notion,vakilian2022improved,chakrabarti2022new}. However as been discussed in the literature \cite{gupta2020too}, different notions of fairness are incompatible with each other in the sense that they cannot be satisfied simultaneously. For example, see the discussion and experimental result regarding incompatibility of social fairness and equitable representation in \cite{GhadiriSV21}. In addition to these, several other notions of fairness for clustering has been considered in the literature \cite{chen2019proportionally,jung2019center,mahabadi2020individual,micha2020proportionally,brubach2020pairwise}.

\section{Bicriteria Approximation}
\label{sec:bicriteria}
In this section, we prove Theorem~\ref{thm:bicriteria}.
Our method relies on solving a series of linear programs and utilizing the iterative rounding framework for the $k$-median problem as developed in~\cite{krishnaswamy2018constant,GuptaMZ2020}. We aim for a cleaner exposition over smaller constants below.
We use the following standard linear programming (LP) relaxation (\lpa). 
\begin{center}
\setlength{\tabcolsep}{3pt}
\begin{tabular}{ |r l |r l | } 
 \hline
$\min$ & $z$ \hfill \textbf{(LP1)} &  $\min$ & $z$ \hfill \textbf{(LP2)}\\[-0.42cm]
 s.t. & $ z\geq \sum\limits_{i\in A_s, j\in \cF} w_s(i) d(i,j)^p x_{ij},$ & s.t. & \parbox{5cm}{\vspace{0.18cm}\begin{flalign*}
 \textstyle
 z\geq \sum\limits_{j\in A_s} \sum\limits_{i\in F_j} w_s(i) d(i,j)^p y_i, & &
 \end{flalign*}} \\[-0.7cm]
 & \hfill $\forall ~ 1\leq s\leq m,$ & & \parbox{5cm}{\begin{flalign}
 \label{eq:LP2-first-constraint} \textstyle
 && \forall ~ 1\leq s\leq m,
 \end{flalign}}
 \\[-0.2cm]
  & $x_{ij}\leq y_j \quad , \forall ~ i\in \cA , j\in \cF,$& &  \\[-0.2cm]
    & $\sum_{j\in \cF} y_j=k, $ & & \parbox{5cm}{\begin{flalign}\label{eq:LP2-second-constraint}\textstyle\sum_{j\in \overline{\cF}} y_j=k, &&\end{flalign}} \\[-0.55cm]
    & $\sum_{j\in \cF} x_{ij} =1 \quad , \forall ~ i\in \cA,$ & &\parbox{5cm}{\begin{flalign}\label{eq:LP2-third-constraint}\textstyle\sum_{j\in F_i} y_j =1 \quad , \forall i\in \cA, &&\end{flalign}}\\[-0.2cm]
   & $x,y\geq 0.$ && $y\geq 0.$\\
\hline
\end{tabular}
\end{center}
Theorem~\ref{thm:bicriteria} follows as a corollary to Lemma~\ref{lem:fractional} as we can pick all the fractional centers integrally. Observe that, once the centers have been fixed, the optimal allocation of clients to facilities is straightforward: every client connects to the nearest opened facility.

    \begin{lemma}\label{lem:fractional}
    Let $0<\lambda\leq 1$. There is a polynomial time algorithm that given a feasible solution $(\x,\y,\z)$ to the linear program \lpa~returns a feasible solution $(x',y',z')$ where 
    $\textstyle z'\leq \left(\left(1+\frac{2(1+\lambda)}{\lambda}\right)(1+\lambda)\right)^p \z$ 
    and $y'$ has support $k+m$ fractional variables. The running time of the algorithm is polynomial in $n$ and the logarithm of the diameter of the metric divided by $\lambda$.
    \end{lemma}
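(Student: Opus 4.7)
The plan is to transform the feasible LP1 solution $(\tilde x,\tilde y,\tilde z)$ into a feasible solution of LP2 whose feasible region is the intersection of a partition-matroid base polytope with $m$ affine half-spaces, and then invoke Lemma~\ref{lem:matroid}. For each client $i\in\cA$ I first define a radius $R_i$, taken to be the smallest value, rounded up to a power of $(1+\lambda)$, for which the ball $F_i=\{j\in\cF:d(i,j)\leq R_i\}$ carries at least one unit of fractional $\tilde y$-mass. The geometric discretization costs a factor $(1+\lambda)^p$ but leaves only $O(\log(\mathrm{diam})/\log(1+\lambda))$ distinct radii, which drives the stated running-time bound.

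Next I would greedily build a set $\overline{\cF}\subseteq\cA$ of \emph{representative} clients by scanning clients in order of increasing $R_i$ and admitting $i$ into $\overline{\cF}$ iff $F_i$ is disjoint from every previously admitted ball. Then $\{F_i:i\in\overline{\cF}\}$ is pairwise disjoint, and every non-representative $i$ has a representative $\pi(i)$ with $R_{\pi(i)}\leq R_i$ and $F_i\cap F_{\pi(i)}\neq\emptyset$. A triangle-inequality argument, combined with the fact that the previous bucket radius $R_i/(1+\lambda)$ failed to accumulate a unit of $\tilde y$-mass around $i$, gives $d(i,j)\leq\bigl(1+2(1+\lambda)/\lambda\bigr)\cdot d(i,F_i)$ for every $j\in F_{\pi(i)}$. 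This factor $\bigl(1+2(1+\lambda)/\lambda\bigr)$ is where $1/\lambda$ enters the approximation ratio.

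With $\overline{\cF}$ fixed, constraint~(\ref{eq:LP2-third-constraint}) enforces that the disjoint balls over $\overline{\cF}$ each carry exactly one unit of $y$, which together with~(\ref{eq:LP2-second-constraint}) is the base polytope of a rank-$k$ partition matroid. The $m$ per-group constraints~(\ref{eq:LP2-first-constraint}) are $m$ additional affine half-spaces, so by Lemma~\ref{lem:matroid} any extreme optimum $(y',z')$ of LP2 has at most $k+m$ non-zero coordinates in $y'$, yielding the support bound. To bound $z'$, I would exhibit a feasible LP2 solution derived from $(\tilde x,\tilde y)$ by rescaling $\tilde y$ inside each $F_i$ (for $i\in\overline{\cF}$) so that its mass is exactly one, and redirecting every non-representative client's fractional coverage to its representative's ball. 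Applying the displacement bound of the previous paragraph in the $p$-th power and compounding with the $(1+\lambda)^p$ discretization factor then yields the per-group cost blow-up $\bigl((1+2(1+\lambda)/\lambda)(1+\lambda)\bigr)^p$.

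The main obstacle is this last step: verifying that after the reassignment the right-hand side of~(\ref{eq:LP2-first-constraint}) for \emph{every} group $s$ is simultaneously bounded by $\bigl((1+2(1+\lambda)/\lambda)(1+\lambda)\bigr)^p\,\tilde z$, rather than merely in some averaged sense. This requires a careful $p$-th-power triangle inequality along the detour ``client $\to$ original $\tilde y$-facility $\to$ $F_{\pi(i)}$-facility,'' together with the identity $R_i\leq (1+\lambda)\cdot d(i,F_i)$ supplied by the geometric grid, both applied inside each group's sum so that group-wise cancellation is preserved. Once this bookkeeping is in place Lemma~\ref{lem:fractional} follows, and Theorem~\ref{thm:bicriteria} drops out by rounding up all $k+m$ fractional coordinates of $y'$ and choosing $\lambda$ to minimize $(1+\lambda)(1+2(1+\lambda)/\lambda)=5+3\lambda+2/\lambda$, whose minimum $5+2\sqrt 6$ is attained at $\lambda=\sqrt{2/3}$.
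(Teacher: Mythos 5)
Your high-level skeleton (group clients into representatives with pairwise disjoint candidate facility sets, observe that the resulting feasible region is a partition-matroid base polytope intersected with $m$ affine constraints, and invoke Lemma~\ref{lem:matroid} to get support at most $k+m$) matches the paper, and your optimization over $\lambda$ at the end is correct. But the step you yourself flag as ``the main obstacle'' is a genuine gap, not bookkeeping: a \emph{one-shot} greedy clustering followed by a single LP solve cannot be charged against $\tilde z$. For a non-representative client $i$, the radius $R_i$ (whether defined as the smallest radius capturing one unit of $\tilde y$-mass, or as $\max\{d(i,j):\tilde x_{ij}>0\}$) can be arbitrarily larger than that client's LP1 contribution $\sum_j d(i,j)^p\tilde x_{ij}$: the ball of radius $R_i/(1+\lambda)$ may be missing only an $\varepsilon$ fraction of a unit of mass, so the LP pays only about $\varepsilon\cdot(R_i/(1+\lambda))^p$ for $i$, while your redirection to $F_{\pi(i)}$ forces the entire unit of demand of $i$ to travel up to $3R_i$. (Also, the inequality $d(i,j)\le\bigl(1+2(1+\lambda)/\lambda\bigr)d(i,F_i)$ cannot be what you intend, since $d(i,F_i)$ is a minimum and can be $0$.) Consequently the per-group cost after redirection is not bounded by any constant multiple of $\tilde z$.

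The paper's proof is structured differently precisely to avoid this. It is an \emph{iterative} rounding: a non-representative client is never fully redirected; its term in the LP objective is $\sum_{j\in B_i}d(i,j)^p y_j+(1-y(B_i))D_i^p$, so only the \emph{shortfall} $1-y(B_i)$ of mass in the shrunken ball is charged at $D_i^p$, which keeps the LP value non-increasing across iterations. Whenever $y(B_i)=1$, the radius $D_i$ is divided by $(1+\lambda)$ and the client may be promoted to a representative, demoting others; the factor $1+2(1+\lambda)/\lambda$ then arises in Claim~\ref{claim:close-facility} as the geometric series $D_{i'}+2\sum_{j\ge 0}D_{i'}(1+\lambda)^{-j}$ over the chain of successively demoted representatives, not from a single triangle inequality. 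This is also why the running time genuinely depends on $\log(\mathrm{diam})/\lambda$ (the number of times each $D_i$ can shrink), a dependence that plays no role in a one-shot scheme. Repairing your argument requires reintroducing this shortfall-charging and chain structure, at which point it becomes the paper's proof.
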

    \begin{proof}
    We describe the iterative rounding argument to round the solution $(\x,\y,\z)$. As a first step, we work with an equivalent linear program \lpb, where we have removed the assignment variables $x$. This can be achieved by splitting each facility $j$ to the number of unique nonzero $\tilde{x}_{ij}$'s and setting the corresponding variable for these facilities accordingly, e.g., if the unique weights are $\tilde{x}_{1j}<\tilde{x}_{2j}<\tilde{x}_{3j}$, then the corresponding weights for the facilities are $\tilde{x}_{1j}$, $\tilde{x}_{2j} - \tilde{x}_{1j}$, and $\tilde{x}_{3j}-\tilde{x}_{2j}$ and the weights of the connections between these new facilities and clients are determined accordingly as either zero or the weight of the facility. 
    
    Let $\overline{\cF}$ be the set of all (splitted) copies of facilities. Then we can assume $\x_{ij}\in \{0,\y_j\}$ for each $i,j$ (where $j\in \overline{\cF}$). We set $F_i=\{j\in \overline{\cF}: \x_{ij}>0\}$. Note that $F_i$ could contain multiple copies of original facilities. Observe that \lpb~has a feasible solution $(\y,\z)$ for this choice of $F_i$ for each $i\in \cA$. Moreover, any feasible solution to \lpb~can be converted to a solution of \lpa~of same cost while ensuring that each client $i$ gets connected to the original copy of the facilities in $F_i$.
    
    The iterative argument is based on the following principle. We group nearby clients and pick only one \emph{representative} for each group such that if each client is served by the facility that serves its representative, the cost is at most $\left(\left(1+\frac{2(1+\lambda)}{\lambda}\right)(1+\lambda)\right)^p \tilde{z}$. Moreover, we ensure that candidate facilities $F_i$ for representative clients are disjoint. In this case, one observes that the constraints~\eqref{eq:LP2-second-constraint}-\eqref{eq:LP2-third-constraint} in \lpb~define the convex hull of a partition matroid and must be integral. Indeed, this already gives an integral solution to the basic $k$-median problem. But, in the socially fair clustering problem, there are $m$ additional constraints, one for each of the $m$ groups. Nevertheless, by Lemma \ref{lem:matroid}, any extreme point solution to the matroid polytope intersected with at most $m$ linear constraints has a support of size at most $k+m$ (see also~\cite{LRS} Chap. 11).
    
    We now formalize the argument and specify how one iteratively groups the clients. We need to iteratively remove/change constraints in \lpb~as we do the grouping while ensuring that the cost of the linear program does not increase. We initialize $D_i=\max\{d(i,j): \x_{ij}>0\}=\max\{d(i,j): j\in F_i\}$ for each client $i$. We maintain a set of representative  clients $\cU^{\star}$. We say a client $i \in \cU^{\star}$ represents a client $i'$ if they share a facility, i.e., $F_i \cap F_{i'} \neq \emptyset$, and $D_i\leq D_{i'}$. The representative clients do not share any facility with each other. The non-representative clients are put in the set $\cU^f$. We initialize $\cU^\star$ as follows. Consider all clients in increasing order of $D_i$. Greedily add clients to $\cU^\star$ while maintaining $F_i$ and $F_{i'}$ are disjoint for any $i,i'\in \cU^{\star}$. Observe that $\cU^{\star}$ is maximal with above property, i.e., for every $i'\not\in \cU^{\star}$, there is $i\in \cU^{\star}$ such that $F_i\cap F_{i'}\neq \emptyset$ and $D_i\leq D_{i'}$. We will maintain this invariant in the algorithm. For clients $i\in \cU^{f}$, we set $B_i$ to be the facilities in $F_i$ that are within a distance of $\frac{D_i}{1+\lambda}$. In each iteration we solve the following linear program and update $\cU^\star$, $\cU^f$, $D_i$'s, $B_i$'s, and $F_i$'s.
    \def \lpc {LP($\cU^\star, \cU^f, D$) }
\begin{align}
\tag{LP($\cU^\star, \cU^f, D$)}
    \textstyle\min & ~ z \\
    \textstyle
    \text{s.t.} & \textstyle
     ~ z \geq \sum_{i\in A_s\cap \cU^\star} w_i(s) \sum_{j\in F_i} d(i,j)^p y_j \nonumber  \\ & \textstyle ~~ +\sum_{i\in A_s\cap \cU^f} w_i(s) \left(\sum_{j\in B_i} d(i,j)^p y_j + (1-y(B_i)) D_i^p\right), \forall 1\leq s\leq m, \label{cons:obj}\\
    & ~~ \textstyle{\sum_{j\in \overline{\cF}} y_j=k,} \\
    & ~~ \textstyle{\sum_{j\in F_i} y_j =1 \quad , \forall i\in \cU^{\star}}, \label{cons:star}\\
    & ~~ \textstyle{\sum_{j\in B_{i}} y_j \leq 1 \quad , \forall i\in \cU^f}, \label{cons:bad2}\\
    & ~~ \textstyle{y\geq 0} .
    \end{align}
    For clients in $i\in \cU^f$, we only insist that we pick \emph{at most} one facility from $B_i$ (see constraint \eqref{cons:bad2}). The objective is modified to pay $D_{i}$ for any fractional shortfall of facilities in this smaller ball (see constraint~\eqref{cons:obj}).  Observe that if this additional constraint \eqref{cons:bad2} becomes tight for some $j\in \cU^f$, we can decrease $D_{i}$ by a factor of $(1+\lambda)$ for this client and then update $\cU^{\star}$ accordingly to see if $i$ can be included in it. 
Also, we round each $d(i,j)$ to the nearest power of $(1+\lambda)$. This only changes the objective by a factor of $(1+\lambda)^p$ and we abuse notation to assume that $d$ satisfies this constraint (it might no longer be a metric but in the final assignment, we will work with its metric completion). 
The iterative algorithm will run as follows.

\RestyleAlgo{algoruled}
\IncMargin{0.15cm}
\begin{algorithm}[h]
\footnotesize
\textbf{Input:} $\mathcal{A}=A_1\cup\cdots\cup A_m, \mathcal{F}, k, d, \lambda$\\
\textbf{Output:} A set of centers of size at most $k+m$.\\
Solve \lpa~to get optimal solution $(x^\star,y^\star,z^\star)$ and reate set $\overline{\cF}$ by splitting facilities.\\ Set $F_i=\{j\in \overline{\cF}: x^\star_{ij}>0\}, D_i=\max\{d_{ij}: x^\star_{ij}>0\}$ for each $i\in \cU$.\\
Sort clients in increasing order of $D_i$ and greedily include clients in $\cU^{\star}$ while maintaining that $\{F_i: i\in \cU^{\star}\}$ remain disjoint. \\
Set $\cU^{f}=\mathcal{A}\setminus\cU^{\star}$.\\
\While{there is some tight constraint from \eqref{cons:bad2}}{
\If{there exists $i\in \cU^{f}$ such that $y(B_i)=1$ (i.e., \eqref{cons:bad2} is tight for $i$)}{
       \[\textstyle F_i\gets B_i, D_i\gets \frac{D_i}{1+\lambda},  B_i\gets \{j\in F_i: d(i,j)\leq \frac{D_i}{1+\lambda}\}, \text{ and} ~  \textrm{ Update-}\cU^\star(i).\]}
Find an extreme point solution $y$ to the linear program \lpc.
}
\Return the support of $y$ in the solution of \lpc.\\
\hrulefill\\
\setcounter{AlgoLine}{0}
  \SetKwProg{myproc}{Procedure}{}{}
  \myproc{$\textrm{ Update-}\cU^\star(i)$}{
  \If{for every $i'\in \cU^{\star}$ that $F_i\cap F_{i'}\neq\emptyset$, $D_{i'}> D_i$}{
Remove all $i'$ that represent $i$ from $\cU^{\star}$ and add them to $\cU^{f}$. \\
$\cU^{\star}\gets \cU^{\star}\cup \{i\}.$}}
 
\caption{Iterative Rounding}
\label{alg:iterative_rounding}
\end{algorithm}

 It is possible that a client moves between $\cU^f$ and $\cU^{\star}$ but any time that a point is processed in $\cU^f$ (Step 3(a) above), $D_i$ is divided by $(1+\lambda)$. Therefore, the above algorithms takes at most $O(n \log\frac{(\text{diam})}{\lambda})$ iterations, where $\text{diam}$ is the distance between the two farthest points.
Finally the result is implied by the following claims.

\begin{claim}
\label{claim:pseudo-sol}
The cost of the LP is non-increasing over iterations. Moreover, when the algorithm ends, there are at most $k+m$ facilities in the support.
\end{claim}

\begin{claim}\label{claim:close-facility}
For any client $i'\in \cU^{f}$, there is always one total facility at a distance of at most $\left(1+ \frac{2(1+\lambda)}{\lambda}\right)D_{i'}$, i.e., $\sum_{j: d(i',j)\leq (1+2(1+\lambda)/\lambda)D_{i'}} y_j\geq 1$.  
\end{claim}

\begin{figure}[t]
    \centering
    \begin{subfigure}[b]{0.4\textwidth}
    \centering
    \begin{tikzpicture}
    \draw[thick] (0,0) circle (0.75);
    \filldraw[black] (0,0) circle (1pt) node[anchor=east]{$i$};
    \draw[black] (0,0) -- node[anchor=east]{$D_i$} (0,-0.75);
    \draw[thick] (1.25,0) circle (1);
    \filldraw[black] (1.25,0) circle (1pt) node[anchor=west]{$i'$};
    \draw[black] (1.25,0) -- node[anchor=west]{$D_{i'}$} (1.25,-1);
    \draw[black] (0,0) -- (1.25,0);
    \filldraw[black] (-0.2,0.5) circle (1pt) node[anchor=east]{$j$};
    \draw[black] (0,0) -- (-0.2,0.5);
    \draw[black] (1.25,0) -- (-0.2,0.5);
    \end{tikzpicture}
    \subcaption{}
    \end{subfigure}
    \begin{subfigure}[b]{0.4\textwidth}
    \centering
    \begin{tikzpicture}
    \draw[thick] (0,0) circle (0.25);
    \draw[thick] (1.5,0.5) circle (0.35);
    \draw[thick] (-0.5,0.75) circle (0.5);
    \draw[thick,dashed] (0.3,0.75) circle (0.75);
    \draw[thick,dashed] (1.9,0.75) circle (0.5);
    \filldraw[black] (0,0) circle (1pt);
    \filldraw[black] (1.5,0.5) circle (1pt);
    \filldraw[black] (-0.5,0.75) circle (1pt);
    \filldraw[black] (0.3,0.75) circle (1pt);
    \filldraw[black] (1.9,0.75) circle (1pt);
    \draw[black] (0,0) -- (-0.25,0);
    \draw[black] (1.5,0.5) -- (1.15,0.5);
    \draw[black] (-0.5,0.75) -- (-1,0.75);
    \draw[black] (0.3,0.75) -- (0.3,1.5);
    \draw[black] (1.9,0.75) -- (1.9,1.25);
    \end{tikzpicture}
    \subcaption{}
    \end{subfigure}
    \caption{(a) Distance of $i'$ from the facilities of its representative $i$. (b) Solid circles are the balls corresponding to representative clients ($\cU^{\star}$) and dashed circles are the balls corresponding to non-representative clients ($\cU^{f}$).}
    \label{fig:dist-to-rep-facility}
\end{figure}
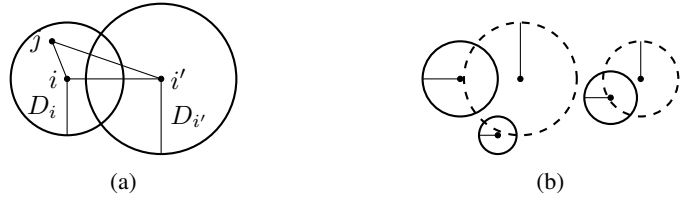

\begin{claim}
\label{claim:bi-cost}
Let $\hat{y}$ be an integral solution to  linear program \lpc after the last iteration. Then, there is a solution $(\hat{x},\hat{y})$ to the linear program  \lpa~such that objective is at most $\left(\left(1+\frac{2(1+\lambda)}{\lambda}\right)(1+\lambda)\right)^p$ times the objective of the linear program \lpc.
\end{claim}
\vspace{-2.5em}
\end{proof}

We prove the above claims in the following to finish the proof of Lemma \ref{lem:fractional}.

\begin{proof}[Proof of Claim \ref{claim:pseudo-sol}]
In each iteration, we put at most one client in $\cU^{\star}$. For this client, \eqref{cons:bad2} is tight, i.e., $\sum_{j\in B_i} y_j = 1$. Note that we update $F_i$ to $B_i$. Therefore the new point in $\cU^{\star}$ satisfies \eqref{cons:star}. Moreover, for a point $i'$ that is removed from $\cU^{\star}$, we have
\[
\sum_{j\in B_{i'}} y_j \leq \sum_{j\in F_{i'}} y_j = 1.
\]
Therefore such a point satisfies \eqref{cons:bad2}. Hence a feasible solution to the LP of iteration $t$ is also feasible for iteration $t+1$. Therefore the cost of the LP is non-increasing over iterations.

The second statement follows since if no constraint from \eqref{cons:bad2} is tight, then the linear program is the intersection of a matroid polytope with $m$ linear constraints and the result follows from Lemma \ref{lem:matroid}.
\end{proof}

\begin{proof}[Proof of Claim \ref{claim:close-facility}]
Let $t$ be the iteration where $D_{i'}$ is updated for the last time. If $D_{i'}$ is only set once at Line 4 of Algorithm \ref{alg:iterative_rounding} and it is never updated, then $t=0$. We first show that immediately after iteration $t$, there is one total facility at a distance of at most $3D_{i'}$ from $i'$. If $t=0$, then there existed $i\in\cU^{\star}$ such that $F_i\cap F_{i'}\neq \emptyset$ and $D_i\leq D_{i'}$. Therefore by triangle inequality, all the facilities in $F_i$ are within a distance of at most $3D_{i'}$ from $i'$, see Figure \ref{fig:dist-to-rep-facility} (a). Hence because \eqref{cons:star} enforces one total facility in $F_i$, there exists one total facility at a distance of at most $3D_{i'}$ from $i'$. If $t>0$, then $i'$ is moved from $\cU^{\star}$ to $\cU^{f}$ because at iteration $t$, a facility $i$ is added to $\cU^{\star}$ such that $D_i < D_{i'}$ and $F_i\cap F_{i'}\neq\emptyset$ --- see the condition of Procedure $\textrm{ Update-}\cU^\star(i)$ in Algorithm \ref{alg:iterative_rounding}. Again because of enforcement of \eqref{cons:star} and triangle inequality, there exists one total facility at a distance of at most $3D_{i'}$ from $i'$ immediately after iteration $t$.

Now note that after iteration $t$, the facility $i\in\cU^{\star}$ with $F_{i} \cap F_{i'}\neq \emptyset$ and $D_{i}\leq D_{i'}$ might get removed from $\cU^{\star}$. In which case, we do not have the guarantee of \eqref{cons:star} any longer. Let $i_0:=i$. We define $i_{p+1}$ to be the client that has caused the removal of client $i_p$ (through Procedure $\textrm{ Update-}\cU^\star(i)$) from $\cU^{\star}$ after iteration $t$. Note that by the condition of $\textrm{ Update-}\cU^\star(i)$) from $\cU^{\star}$, $D_{i_{p+1}} < D_{i_p}$. Therefore because we have rounded the distances to multiples of $(1+\lambda)$, we have $D_{i_{p+1}} \leq  \frac{D_{i_p}}{1+\lambda}$. Let $i_r$ be the last point in this chain, i.e., $i_r$ has caused the removal of $i_{r-1}$ and $i_r$ has stayed in $\cU^{\star}$ until termination of the algorithm. Then by guarantee of \eqref{cons:star} and triangle inequality, there is one total facility for $i'$ within a distance of
\[
D_{i'} + \sum_{j=0}^r 2 D_{i_j} \leq D_{i'} + 2\sum_{j=0}^r \frac{D_{i'}}{(1+\lambda)^j} \leq \left(1+ \frac{2(1+\lambda)}{\lambda}\right) D_{i'}.
\]
\end{proof}
\def \lpc {LP($\cU^\star, \cU^f, D$) }
\begin{proof}[Proof of Claim \ref{claim:bi-cost}]
By Claim \ref{claim:pseudo-sol}, at every iteration, the cost of the linear program only decreases since a feasible solution to previous iteration remains feasible for the next iteration. Thus the objective value of $\hat{y}$ in \lpc is at most $(1+\lambda)^p$ the optimal cost of \lpa~(where we lost the factor of $(1+\lambda)^p$ by rounding all distances to powers of $(1+\lambda)$).

We now construct $\hat{x}$ such that $(\hat{x}, \hat{y})$ is feasible to \lpa. First note that the above procedure always terminates. We construct $\hat{x}$ by processing clients one by one. We process the clients in $\cU^f$ and $\cU^{\star}$ as follows. For any $i\in \cU^{\star}$, we define $\hat{x}_{ij}=\hat{y}_j$ for each $j\in F_i$. Observe that we have $\sum_{j\in F_i} x_{ij}=1$ for such $i\in \cU^{\star}$ and we obtain feasibility for this client. For any $i\in \cU^{f}$, we define $\hat{x}_{ij}=\hat{y}_j$ for each $j\in B_i$. Observe that we only insisted $\sum_{j\in B_i} \hat{y}_j\leq 1$ and therefore we still need to find $1-\sum_{j\in B_i}\hat{x}_{ij}=1-\sum_{j\in B_i} \hat{y}_j$ facilities to assign to client $i$. For this remaining amount $1-\hat{y}(B_i)$, we notice by Claim \ref{claim:close-facility}, there is at least one facility within distance $\left(1+\frac{2(1+\lambda)}{\lambda}\right)D_i$ of this client. Thus we can assign the remaining $1-\hat{y}(B_i)$ facility to client $i$ at a distance of no more than $\left(1+\frac{2(1+\lambda)}{\lambda}\right)D_i$.
Note that the cost is only increased by a factor of $\left(1+\frac{2(1+\lambda)}{\lambda}\right)^p$.
\end{proof}

Now we prove Theorem \ref{thm:bicriteria} by substituting the best $\lambda$ in Lemma \ref{lem:fractional}.

\begin{proof}[Proof of Theorem \ref{thm:bicriteria}]
By Lemma \ref{lem:fractional}, the output vector of Algorithm \ref{alg:iterative_rounding} corresponding to the centers has a support of size at most $k+m$. Rounding up all the fractional centers, we get a solution with at most $k+m$ centers and a cost of at most $\left(\left(1+\frac{2(1+\lambda)}{\lambda}\right)(1+\lambda)\right)^p$ of the optimal. We optimize over $\lambda$ by taking the gradient of $\left(1+\frac{2(1+\lambda)}{\lambda}\right)(1+\lambda)$ and setting it to zero. This gives the optimum value of $\lambda=\sqrt{\frac{2}{3}}$. Substituting $\lambda$ in the approximation factor, gives a total approximation factor of $(5+2\sqrt{6})^p$.
\end{proof}

\section{Approximation Algorithms for Fair $k$-Clustering}\label{sec:reduction}

Here we first show how to generate a set of instances such that at least one of them is sparse and has the same optimal objective value as the original instance. Then we present our algorithm to find a solution with $k$ facilities from a pseudo-solution with $k+m$ facilities for a sparse instance, inspired by \cite{li2016approximating}. We need to address some new difficulties: the sparsity with respect to all groups $s\in[m]$; and as our pseudo-solution has $m$ additional centers (instead of $O(1)$ additional centers), we need a sparser instance compared to \cite{li2016approximating}. One new technique is solving the optimization problem given in Step 11 of Algorithm \ref{alg:pseudotosolution} (see Lemma \ref{lemma:sub_problem_k_one_center}). This is trivial for the vanilla $k$-median but in the fair setting, we use results about the number of fractional variables in extreme points of intersection of a matroid polytope with half-spaces, and combine this with a careful enumeration.

For an instance $\cI$, we denote the cost of a set of facilities $F$ by $\cost_{\cI}(F)$. For a point $q$ and $r>0$, we denote the facilities in the ball of radius $r$ at $q$ by $\fball_{\cI}(q,r)$. This does not contain facilities at distance exactly $r$ from $q$. For a group $s\in[m]$, the set of clients of $s$ in the ball of radius $r$ at $q$ is $\cball_{\cI,s}(q,r)$. Note that because we consider the clients as weights on points, $\cball_{\cI,s}(q,r)$ is actually a set of (point, weight) pairs. We let $|\cball_{\cI,s}(q,r)|=\sum_{i\in\cball_{\cI,s}(q,r)} w_s(i)$.

\label{sec:sparse}

\begin{definition}
\label{def:sparse_instance}[Sparse Instance]
For $\alpha>0$, an instance of the fair $\ell_p$ clustering problem $\cI=(k,\cF,\cA,d)$ is $\alpha$-sparse if for each facility $j\in \cF$ and group $s\in [m]$,
\[
\textstyle
\left(\frac{2}{3} d(j,\OPT_\cI)\right)^p \cdot |\cball_{\cI,s} (j,\frac{1}{3} d(j,\OPT_\cI))| \leq \alpha.
\]
We say that a facility $j$ is $\alpha$-dense if it violates the above for some group $s\in[m]$.
\end{definition}

To motivate the definition, let $\cI$ be an $\alpha$-sparse instance, $\OPT_{\cI}$ be an optimal solution of $\cI$, $j$ be a facility not in $\OPT_{\cI}$ and $j^*$ be the closest facility in $\OPT_{\cI}$ to $j$. Let $F$ be a solution that contains $j$ and $\eta_{j,s}$ be the total cost of the clients of group $s\in[m]$ that are connected to $j$ in solution $F$. Then,
\begin{align*}
&(\text{cost of group $s$ for solution} ~ F\cup j\setminus j^*)  \leq (\text{cost of group $s$ for solution} ~ F) + 2^{O(p)}\cdot(\alpha + \eta_{j,s}). 
\end{align*}
This property implies that if $\alpha \le \opt_{\cI}/m$, then replacing $m$ different facilities can increase the cost by a factor of $2^{O(p)}$ plus $2^{O(p)}\cdot \opt_{\cI}$, and the integrality gap of the LP relaxation is $2^{O(p)}$. 
The next algorithm generates a set of instances such that at least one of them has objective value equal to $\opt_{\cI}$ and is $(\opt_{\cI}/mt)$-sparse for a fixed integer $t$. 
\RestyleAlgo{algoruled}
\IncMargin{0.15cm}
\begin{algorithm}[t]
\footnotesize
\textbf{Input:} $\mathcal{A}=A_1\cup\cdots\cup A_m, \mathcal{F}, k, d, t\in\mathbb{N}$\\
\textbf{Output:} A set of fair $k$-median instances.\\
\For{$t' = 1,\ldots,m^2 t$ 
\textrm{ and } $t'$ \textbf{facility pairs} $(j_1,j'_1),\ldots,(j_{t'},j'_{t'})$}{
Output $\cI'=(\mathcal{F}',\mathcal{A},k,d)$, where $\mathcal{F}'=\mathcal{F}\setminus \bigcup_{r=1}^{t'} \texttt{FBALL}(j_r,d(j_r,j'_{r})).$
}
\caption{Sparsify}
\label{alg:sparseInstance}
\end{algorithm}

\begin{lemma}
\label{lemma:sparse}
Algorithm \ref{alg:sparseInstance} runs in $n^{O(m^2t)}$ time and produces instances of the socially fair $\ell_p$ clustering problem such that at least one of them satisfies the following: (1) The optimal value of the original instance $\mathcal{I}$ is equal to the optimal value of the produced instance $\mathcal{I}'$; (2) $\mathcal{I}'$ is $\frac{\opt_\cI}{mt}$-sparse.
\end{lemma}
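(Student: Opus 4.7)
My plan is to exhibit a specific iterative deletion process starting from $\cI$ and to argue that (a) its output satisfies properties (1) and (2), (b) it terminates in fewer than $m^2 t$ steps, and (c) every such execution is among those enumerated by Algorithm~\ref{alg:sparseInstance}; the running-time bound $n^{O(m^2 t)}$ is then immediate from the enumeration.

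Set $\cI^{(0)} := \cI$. At step $r$, if $\cI^{(r)}$ is already $\opt_\cI/(mt)$-sparse, stop; otherwise select any $\opt_\cI/(mt)$-dense facility $j_{r+1}\in \cF^{(r)}$ together with a witness group $s_{r+1}$, let $j'_{r+1}\in \OPT_\cI$ be a closest optimal facility to $j_{r+1}$, and set $\cF^{(r+1)} := \cF^{(r)} \setminus \fball_{\cI^{(r)}}(j_{r+1}, d(j_{r+1}, j'_{r+1}))$. Since $\fball$ is an open ball and $d(j_{r+1}, j'_{r+1}) = d(j_{r+1}, \OPT_\cI)$, the facility $j'_{r+1}$ survives, so by induction $\OPT_\cI \subseteq \cF^{(r)}$ for every $r$. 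This gives property (1) at termination, and it also lets me use $\OPT_\cI$ in place of $\OPT_{\cI^{(r)}}$ when evaluating the density condition at each step.

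The heart of the plan is bounding the number of iterations. For a step $r$ with witness $s := s_{r+1}$ define the charged set $B_r := \cball_{\cI, s}(j_{r+1}, \tfrac{1}{3} d(j_{r+1}, \OPT_\cI))$. For $i\in B_r$ a triangle inequality gives $d(i,\OPT_\cI) \ge \tfrac{2}{3} d(j_{r+1}, \OPT_\cI)$, so the density condition becomes the cost statement $\sum_{i\in B_r} w_{s}(i) d(i, \OPT_\cI)^p > \opt_\cI/(mt)$. The crux is a disjointness claim: for any two same-witness iterations $r < r'$, the sets $B_r$ and $B_{r'}$ are disjoint. Indeed, if $i\in B_r\cap B_{r'}$, then $j_{r'+1}$ must have survived the step-$r$ removal, so $d(j_{r+1}, j_{r'+1}) \ge d(j_{r+1}, \OPT_\cI)$; combined with $d(i, j_{r+1}) < \tfrac{1}{3} d(j_{r+1}, \OPT_\cI)$ and $d(i, j_{r'+1}) < \tfrac{1}{3} d(j_{r'+1}, \OPT_\cI)$ this forces $d(j_{r'+1}, \OPT_\cI) > 2 d(j_{r+1}, \OPT_\cI)$; yet two further triangle-inequality applications (using membership in $B_r$ and $B_{r'}$ respectively) give $d(i,\OPT_\cI) < \tfrac{4}{3} d(j_{r+1},\OPT_\cI)$ and $d(j_{r'+1},\OPT_\cI) < \tfrac{3}{2} d(i,\OPT_\cI)$, so $d(j_{r'+1},\OPT_\cI) < 2 d(j_{r+1},\OPT_\cI)$, a contradiction. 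With disjointness in hand, the per-group contributions sum to at most $\sum_{i\in A_s} w_s(i) d(i,\OPT_\cI)^p \le \opt_\cI$, so fewer than $mt$ iterations can have witness $s$, hence fewer than $m^2 t$ in total.

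Finally, Algorithm~\ref{alg:sparseInstance} loops over every $t' \in [m^2 t]$ and every list of $t'$ ordered facility pairs, so the sequence $(j_1, j'_1), \ldots, (j_T, j'_T)$ produced by the process appears for some iterate, and the output matches $\cI^{(T)}$, which is $\opt_\cI/(mt)$-sparse by the stopping rule. The total number of enumerations is $\sum_{t' \le m^2 t} |\cF|^{2 t'} = n^{O(m^2 t)}$, each handled in polynomial time, which matches the claimed running time. The main obstacle I anticipate is the disjointness argument in paragraph three: arranging the "survives-removal" bound against two rounds of triangle inequality to produce the strict contradiction; the rest is bookkeeping once that is established.
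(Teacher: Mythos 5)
Your proposal is correct and follows essentially the same route as the paper: greedily remove open balls around $\opt_\cI/(mt)$-dense facilities (so that $\OPT_\cI$ survives and the optimum is preserved), show via the triangle inequality that the client balls of radius $\frac{1}{3}d(j,\OPT_\cI)$ charged at distinct iterations are disjoint, and charge each to more than $\opt_\cI/(mt)$ of the optimal cost. The only noteworthy difference is in the counting: you bound the number of iterations per witness group by $mt$ and sum over the $m$ groups, whereas the paper sums all balls across groups simultaneously and appeals to disjointness of the $A_s$'s; your per-group accounting is slightly cleaner in that it does not require the groups to be disjoint.
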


\begin{proof}
First note that a facility $i$ in $\OPT_\cI$ cannot be $\alpha$-dense because $d(i,\OPT_\cI))=0$. Let $(j_1,j'_1),\ldots,(j_\ell,j'_\ell)$ be a sequence of pairs of facilities such that for every $b=1,\ldots,\ell$,
\begin{itemize}
    \item $j_b\in \cF \setminus \bigcup_{z=1}^{b-1} \fball_{\cI}(j_z,d(j_z,j'_z))$ is an $\frac{\opt_\cI}{mt}$-dense facility; and
    \item $j'_b$ is the closest facility to $j_b$ in $\OPT_\cI$.
\end{itemize}
We show that $\ell\leq m^2 t$. For $b\in[\ell]$ and $s\in[m]$, let $\cB_{b,s}:=\cball_{\cI,s}(j_b,\frac{1}{3} d(j_b,j'_b))$. First we show that for any group $s\in[m]$, the client balls $\cB_{1,s},\ldots,\cB_{\ell,s}$ are disjoint. Let $1\leq z<w\leq \ell$.
By triangle inequality $d(j_w,j'_z) \leq d(j_w,j_z)+d(j_z,j'_z)$. Moreover by definition $j_w \not\in\fball_{\cI}(j_z,d(j_z,j'_z))$. Thus $ d(j_z,j'_z)\leq d(j_w,j_z)$. Hence $d(j_w,j'_z) \leq 2d(j_w,j_z)$. Since $j'_w$ is the closest facility to $j_w$ in $\OPT_\cI$, $d(j_w,j'_w)\leq d(j_w,j'_z)$. Therefore $d(j_w,j'_w)\leq 2d(j_w,j_z)$. Combining this with $d(j_z,j'_z)\leq d(j_w,j_z)$ implies $\frac{1}{3} (d(j_z,j'_z)+d(j_w,j'_w))\leq d(j_z,j_w)$. If $\cB_{z,s}$ and $\cB_{w,s}$ overlap then there exists $u\in \cB_{z,s} \cap \cB_{w,s}$ and by triangle inequality, $d(j_z,j_w)\leq d(j_z,u) + d(j_w,u) <\frac{1}{3} d(j_z,j'_z)+\frac{1}{3} d(j_w,j'_w)$, which is a contradiction. 

Therefore for $s\in[m]$, $\cB_{1,j},\ldots,\cB_{\ell,j}$ are disjoint. Also since $A_1,\ldots,A_s$ are disjoint, all of $\cB_{b,s}$'s are disjoint for $b\in[\ell]$ and $s\in[m]$. By definition, for any $b\in[\ell]$, there exists $s_b\in[m]$ such that 
$\left(\frac{2}{3}d(j_b,\OPT_\cI)\right)^p |\cB_{b,s_b}| > \frac{\opt_\cI}{mt}$.
Therefore, if $\ell> m^2 t$,
$\sum_{b=1}^\ell \left(\frac{2}{3}d(j_b,\OPT_\cI)\right)^p|\cB_{b,s_b}| > m \opt_\cI$. Thus
\[
m \cdot \max_{s\in[m]} \sum_{b=1}^\ell \left(\frac{2}{3}d(j_b,\OPT_\cI)\right)^p |\cB_{b,s}| \geq \sum_{s\in[m]} \sum_{b=1}^\ell \left(\frac{2}{3}d(j_b,\OPT_\cI)\right)^p |\cB_{b,s}| > m \opt_{\cI}.
\]

Note that the connection cost of a client in $\cB_{b,s}$ in the optimal solution is at least $\left(\frac{2}{3}d(j_b,\OPT_\cI)\right)^p=\left(\frac{2}{3} d(j_b,j'_b)\right)^p$. Therefore, as the  $\cB_{b,s}$'s are disjoint, $\opt_{\cI}\geq \max_{s\in[m]} \sum_{b=1}^\ell \left(\frac{2}{3}d(j_b,\OPT_\cI)\right)^p |\cB_{b,s}|$. This is a contradiction. Therefore $\ell\leq m^2 t$. Thus Algorithm \ref{alg:sparseInstance} returns an instance with the desired properties.
\end{proof}

\section{Converting a Solution with $k+m$ centers to a Solution with $k$ centers}
\label{sec:pseudo_to_sol}
We first analyze the special case 
when the set of facilities is partitioned to $k$ disjoint sets and we are constrained to pick exactly one from each set. This will be a subroutine in our algorithm. 

\begin{lemma}
\label{lemma:sub_problem_k_one_center}
Let $S_1,\ldots,S_k$ be disjoint sets such that $S_1\cup\cdots\cup S_k=[n]$. For $g\in [m]$, $j\in[k]$, $v\in S_j$, let $\alpha^{(g,j)}_v\geq 0$. Then there is an  $(nk)^{O(m^2/\epsilon)}$-time algorithm that finds a $(1+\epsilon)$-approximate solution to
$\min_{v_i\in S_i: i\in[k]} ~ \max_{g\in [m]} \sum_{j\in [k]} \alpha_{v_j}^{(g,j)}$.
\end{lemma}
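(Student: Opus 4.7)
The plan is to combine an LP relaxation based on the partition-matroid base polytope with a careful enumeration of the ``heavy'' optimal contributions. Introduce variables $y_{j,v}\in[0,1]$ indicating that part $j$ selects element $v\in S_j$. The constraints $\sum_{v\in S_j}y_{j,v}=1$ for all $j$ and $y\ge 0$ describe the base polytope of a rank-$k$ partition matroid. Appending the $m$ half-space constraints $\sum_{j,v}\alpha_v^{(g,j)}y_{j,v}\le B$ (one per group), Lemma~\ref{lem:matroid} applied to the at most $m$ tight group constraints implies that every extreme point of the resulting polytope has support of size at most $k+m$. Since each part $S_j$ contributes at least one unit of support, at most $m$ parts are fractional at such an extreme point.

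Direct rounding is insufficient because the $\alpha$-values in the support of a fractional part could be arbitrarily large. To bypass this, I guess the large contributions in the optimum and then solve an LP over only the ``light'' remainder. Let $\opt$ denote the optimum. Binary-search $B$ over a $(1+\epsilon/2)$-geometric grid so that at least one guess satisfies $\opt\le B\le (1+\epsilon/2)\opt$. For each subset $H\subseteq[k]$ with $|H|\le m^2/\epsilon$ and each tuple $(v_j)_{j\in H}\in\bigotimes_{j\in H}S_j$, set $L_j=\{v\in S_j:\alpha_v^{(g,j)}\le \epsilon B/m\ \forall g\in[m]\}$ for $j\notin H$ and solve
\begin{align*}
\min\;&z'\\
\text{s.t.}\;&z'\ge \sum_{j\in H}\alpha_{v_j}^{(g,j)}+\sum_{j\notin H}\sum_{v\in L_j}\alpha_v^{(g,j)}\,y_{j,v}&&\forall g\in[m],\\
&\sum_{v\in L_j} y_{j,v}=1&&\forall j\notin H,\qquad y\ge 0.
\end{align*}
Extract an extreme point, round each of the at most $m$ fractional parts by picking an arbitrary vertex in its support, and return the best solution across all enumerations.

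Correctness hinges on the ``correct'' guess. Let $(v_j^\star)_j$ be an optimal solution and set $H^\star:=\bigcup_{g\in[m]}\{j:\alpha_{v_j^\star}^{(g,j)}>\epsilon\opt/m\}$. Since $\sum_j\alpha_{v_j^\star}^{(g,j)}\le\opt$, each group contributes at most $m/\epsilon$ partitions to $H^\star$, so $|H^\star|\le m^2/\epsilon$; hence the enumeration $(H^\star,(v_j^\star)_{j\in H^\star})$ is considered. For the corresponding $B\ge \opt$, the optimum is feasible for the restricted LP (every $j\notin H^\star$ satisfies $\alpha_{v_j^\star}^{(g,j)}\le \epsilon\opt/m\le \epsilon B/m$, so $v_j^\star\in L_j$), giving LP value $z'\le \opt$. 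At the extracted extreme point, the at most $m$ fractional parts outside $H^\star$ have supports entirely inside the corresponding $L_j$, so each surviving $\alpha_v^{(g,j)}$ is at most $\epsilon B/m$. Rounding therefore inflates each group's cost by at most $m\cdot(\epsilon B/m)=\epsilon B\le \epsilon(1+\epsilon/2)\opt$, yielding a final objective of at most $(1+\epsilon)\opt$ after rescaling $\epsilon$ by a constant.

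The running time is dominated by the enumeration: $\sum_{i\le m^2/\epsilon}\binom{k}{i}n^i=(nk)^{O(m^2/\epsilon)}$ outer iterations, each running a polynomial-time LP solve followed by trivial rounding, multiplied by an $O(\epsilon^{-1}\log)$ factor for the binary search on $B$; this fits within $(nk)^{O(m^2/\epsilon)}$. The main obstacle is calibrating the threshold $\epsilon B/m$: it must be small enough that the $\le m$ fractional parts collectively inflate each group's cost by only $\epsilon\,\opt$, yet large enough that the per-group heavy set has size at most $m/\epsilon$ so that $|H^\star|\le m^2/\epsilon$ and the enumeration fits into the target budget. Once this balance is struck, every remaining step is a routine consequence of Lemma~\ref{lem:matroid} and standard LP manipulation.
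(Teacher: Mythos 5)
Your proposal is correct and follows essentially the same route as the paper: the partition-matroid LP with $m$ side constraints, the extreme-point support bound limiting the number of fractional parts to $m$, enumeration of the at most $m^2/\epsilon$ heavy coordinates of the optimum, and an $\epsilon\cdot\opt$ bound on the loss from resolving the fractional parts with light elements. The only (harmless) differences are cosmetic: you guess the optimum value $B$ via a geometric grid and restrict the remaining parts to light elements so fractional parts can be rounded arbitrarily, whereas the paper defines the threshold via the unknown $\theta^*$ and instead enumerates the optimal choices on the fractional parts, using that they lie outside the guessed heavy set.
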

\begin{proof}

The LP relaxation of the above problem is
\begin{align*}
\textstyle{\min} & ~~ \textstyle{\theta} \quad \\ 
\text{s.t.} & ~~ \textstyle{\sum_{j\in [k]} \sum_{v\in S_j} \alpha_v^{(g,j)} x_v^{(j)} \leq \theta, ~ \forall g\in [m],} \\ & ~~ \textstyle{\sum_{v\in S_j} x^{(j)}_{v} = 1, ~ \forall j\in [k],} \quad  \\ & ~~ \textstyle{x^{(j)} \geq 0, ~ \forall j\in[k]}.
\end{align*}

Note that this is equivalent to optimizing over a partition matroid with $m$ extra linear constraints. Therefore by Lemma~\ref{lem:matroid}, an extreme point solution has a support of size at most $k+m$. Now suppose $\theta^*$ is the optimal integral objective value, and $v_1^*\in S_1,\ldots,v_k^* \in S_k$ are the points that achieve this optimal objective. For each $g\in[m]$, at most $m/\epsilon$ many $\alpha^{(g,j)}_{v^*_j}$, $j\in[k]$, can be more than $\frac{\epsilon}{m} \theta^*$ because
$
\sum_{j\in [k]} \alpha_{v_j^*}^{(g,j)}\leq \theta^*
$.
Suppose, for each $g\in[m]$, we guess the set of indices $T_g = \{j\in[k]: \alpha^{(g,j)}_{v^*_j} \geq \frac{\epsilon}{m} \theta^*\}$. This can be done in $k^{O(m^2/\epsilon)}$ time by enumerating over set $[k]$. Let $T=T_1 \cup \cdots \cup T_m$. For $j\in T$, we also guess $v_j^*$ in the optimum solution by enumerating over $S_j$'s, $j\in T$. This increases the running time by a multiplicative factor of $n^{O(m^2/\epsilon)}$ since $v_j^*\in S_j$ and $|S_j| \leq n$.
Based on these guesses, we can set the corresponding variables in the LP, i.e., for each $S_j$ such that $j\in T$, we add the following constraints for $v\in S_j$. $x_v^{(j)}=1$ if $v=v_j^*$, and $x_v^{(j)}=0$, otherwise. 
The number of LPs generated by this enumeration is $(nk)^{O(m^2/\epsilon)}$.

We solve all such LPs to get optimum extreme points. Let $(\overline{\theta},\overline{x}^{(1)},\ldots,\overline{x}^{(k)})$ be an optimum extreme point for the LP corresponding to the right guess (i.e., the guess in which we have identified all indices $j\in[k]$ along with their corresponding $v_j^*$ such that there exists $g\in[m]$ where $\alpha^{(g,j)}_{v^*_j} \geq \frac{\epsilon}{m} \theta^*$). Therefore $\overline{\theta} \leq \theta^*$. 
Let $R = \{j\in[k]: \overline{x}^{(j)} \notin \{0,1\}^{|S_j|}\}$.
Since the feasible region of the LP corresponds to the intersection of a matroid polytope and $m$ half-spaces, by Lemma \ref{lem:matroid}, the size of the support of an extreme solution is $k+m$. Moreover any cluster with $j\in[k]$ that have fractional centers in the extreme point solution, contributes at least $2$ to the size of the support because of the equality constraint in the LP. Therefore $2|R|+(k-|R|)\leq k+m$ which implies $|R|\leq m$.

Now we guess the $v_j^*$ for all $j\in R$. By construction, $R \cap T =\emptyset$. Therefore for all $j\in R$ and $g\in [m]$, $\alpha_{v_j^*}^{(g,j)} < \frac{\epsilon}{m} \theta^*$. Therefore for all $g\in[m]$,
$
\sum_{j\in R} \alpha_{v_j^*}^{(g,j)} \leq m\cdot \frac{\epsilon}{m} \theta^* = \epsilon \theta^*.
$
Thus for the right guess of $v_j^*$, $j\in R$, we get an integral solution with a cost less than or equal to
$
\overline{\theta} + \epsilon \theta^* \leq (1+\epsilon) \theta^*.
$
\end{proof}

Algorithm \ref{alg:pseudotosolution} is our main procedure to convert a solution with $k+m$ centers to a solution with $k$ centers. We need $\beta$ to be in the interval mentioned in Lemma \ref{lemma:pseudo-to-sol}. To achieve this we guess $\opt_{\cI}$ as different powers of two and try the corresponding $\beta$'s. The main idea behind the algorithm is that in a pseudo-solution of a sparse instance, there are only a few ($< m^2 t$) facilities that are far from facilities in the optimal solution. So the algorithm tries to guess those facilities and replace them by facilities in the optimal solution. For the rest of the facilities in the pseudo-solution (which are close to facilities in the optimal solution), the algorithm solves an optimization problem (Lemma \ref{lemma:sub_problem_k_one_center}) to find a set of facilities with a cost comparable to the optimal solution.

\RestyleAlgo{algoruled}
\IncMargin{0.15cm}
\begin{algorithm}[h]
\footnotesize
\textbf{Input:} Instance $\cI$, $\beta$, a pseudo-solution $\cT$, $\epsilon' \!\! > \! 0$, $\delta \!\! \in \! (0,\min\{\! \frac{1}{8},\! \frac{\log(1 + \epsilon')}{12}\!\})$, and integer $t\!\geq\! 4\!\cdot\!(1\!+\!\frac{3}{\delta})^p$. \\
\textbf{Output:} A solution with at most $k$ centers.\\
$\cT'\leftarrow\cT$\\
\lWhile{$|\cT'|>k$ and there is $j\in\cT'$ s.t. $\cost_\cI(\cT'\setminus\{j\})\leq \cost_\cI(\cT')+\beta$}{$\cT'\leftarrow\cT'\setminus\{j\}$}
\lIf{$|\cT'|\leq k$}{\Return $\cT'$}
\ForAll{$\cD\subseteq \cT'$ and $\cV\subseteq\cF$ such that $|\cD|+|\cV|=k$ and $|\cV| <m^2 t$}{
For $j\in \cD$, set $L_j=d(j,\cT'\setminus \{j\})$ \\ 
For $s\in[m]$, $j \! \in \! \cD$, $f_j \! \in \! \fball_{\cI}(j,
\delta L_j)$, set $\alpha^{(s,j)}_{f_j} \! \! = \! \sum_{i\in \cball_{\cI,s}(j,\frac{L_j}{3})} \min\{d(i,f_j)^p,d(i,\cV)^p\}$.\\
Let $(\widetilde{f}_j: j\in \cD)\in \bigotimes_{j\in D} \fball_{\cI}(j,\delta L_j)$ be $(1+\epsilon)$-approximate solution to (see Lemma \ref{lemma:sub_problem_k_one_center})
\vspace{-0.2cm}
\[
\min_{(f_j: j\in \cD)\in \bigotimes_{j\in \cD} \fball_{\cI}(j,\delta L_j)} ~
\max_{s\in[m]} \sum_{j'\in \cD} ~ \alpha^{(s,j')}_{f_{j'}} \vspace{-0.2cm}
\]
$\cS_{\cD,\cV}\leftarrow \cV\cup \{\widetilde{f}_j:j\in\cD\}$
}
\Return $\cS:=\argmin_{\cS_{\cD,\cV}} \cost_\cI (\cS_{\cD,\cV})$
\caption{Obtaining a solution from a pseudo-solution}
\label{alg:pseudotosolution}
\end{algorithm}

Finally combining the following lemma
with Lemma \ref{lemma:sparse} (sparsification) and Theorem \ref{thm:bicriteria} (bicriteria algorithm) implies Theorem \ref{thm:exact}.

\begin{lemma}
\label{lemma:pseudo-to-sol}
Let $\cI=(k,\cF,\cA,d)$ be an $\frac{\opt_{\cI}}{mt}$-sparse instance of the $(\ell_p,k)$-clustering problem, $\cT$ be a pseudo-solution with at most $k+m$ centers, $\epsilon'>0$, $\delta\in(0,\min\{\frac{1}{8},\frac{\log(1+\epsilon')}{12}\})$, $t\geq 4(1+\frac{3}{\delta})^p$ be an integer, and $$\textstyle\frac{2}{mt}\left(\opt_{\cI} + (1+\frac{3}{\delta})^p\cost_\cI(\cT)\right)\leq \beta\leq \frac{2}{mt}\left(2\opt_{\cI} + (1+\frac{3}{\delta})^p\cost_\cI(\cT)\right).$$ Then Algorithm \ref{alg:pseudotosolution} finds a set $\cS\in\cF$ in time $n^{m^2  \cdot 2^{O(p)}}$ such that $|\cS|\leq k$ and $\cost_\cI(\cS)\leq (O(1) +(1+\epsilon')^{p})\left(\cost_\cI(\cT) + \opt_\cI\right)$.
\end{lemma}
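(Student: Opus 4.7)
The plan is to handle the two exits of Algorithm~\ref{alg:pseudotosolution} (early return at line~5, and enumeration-based return at the end) separately, using the greedy removal loop as a bridge that both shrinks $\cT'$ and supplies a key structural invariant. In the loop, each removal costs at most $\beta$, so after at most $m$ removals one has $\cost_\cI(\cT')\le\cost_\cI(\cT)+m\beta$; the hypothesis $\beta\le\frac{2}{mt}(2\opt_\cI+(1+3/\delta)^p\cost_\cI(\cT))$ together with $t\ge 4(1+3/\delta)^p$ gives $m\beta=O(\opt_\cI+\cost_\cI(\cT))$, which is absorbed into the target bound. If the loop exits with $|\cT'|\le k$ we are done; otherwise we inherit the invariant that removing any single $j\in\cT'$ raises cost by more than $\beta$.

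For the enumeration phase I will identify a ``good guess'' $(\cD^\star,\cV^\star)$ realized in the loop. Let $\pi:\OPT_\cI\to\cT'$ send each $o\in\OPT_\cI$ to a nearest element of $\cT'$; call $o$ \emph{close} if $d(o,\pi(o))<\delta L_{\pi(o)}$ and \emph{far} otherwise. Take $\cV^\star$ to be the set of far optima and $\cD^\star$ to be any completion of $\pi(\text{close optima})$ to a subset of $\cT'$ of size $k-|\cV^\star|$. The main combinatorial claim is $|\cV^\star|<m^2t$, proved in the spirit of Lemma~\ref{lemma:sparse}: for distinct far $o,o'$, the definition of $L_{\pi(o)}$ gives $d(\pi(o),\pi(o'))\ge L_{\pi(o)}$, which (via triangle inequality and $\delta\le 1/8$) makes the client balls $\cball_{\cI,s}(\pi(o),\tfrac13 d(\pi(o),\OPT_\cI))$ pairwise disjoint across far optima; sparsity caps each ball's weighted-distance contribution by $\opt_\cI/(mt)$, while the greedy invariant combined with $\beta\ge 2\opt_\cI/(mt)$ charges at least $\opt_\cI/(mt)$ to each far $o$ in some group's cost, so more than $m^2t$ far optima would exceed $m\cdot\opt_\cI$ in total.

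The cost analysis for the good guess routes each client $i$ to $\cS=\cV^\star\cup\{\widetilde f_j:j\in\cD^\star\}$ according to its $\OPT$-service $\sigma(i)$. If $\sigma(i)\in\cV^\star$, send $i$ to $\sigma(i)$ at its $\opt_\cI$-cost. If $\sigma(i)$ is close and $i\in\cball(\pi(\sigma(i)),L_{\pi(\sigma(i))}/3)$, the subroutine of Lemma~\ref{lemma:sub_problem_k_one_center} already prices $i$ through $\alpha^{(s,\pi(\sigma(i)))}$; choosing $f_j=\sigma(i)$ as a feasible witness bounds the min-max optimum by $O(\opt_\cI)$, and the $(1+\epsilon')$-approximation inflates this by $(1+\epsilon')^p$. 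All remaining clients lie outside every such $L_j/3$-ball, so they already paid large cost in $\cT'$, and they can be routed to the nearest $\widetilde f_j$ via triangle inequality and $d(j,\widetilde f_j)\le\delta L_j$ at a $(1+O(\delta))^p\le(1+\epsilon')^p$ overhead on $\cost_\cI(\cT')$. Summing and using $\cost_\cI(\cT')\le\cost_\cI(\cT)+m\beta$ yields the advertised $(O(1)+(1+\epsilon')^p)(\cost_\cI(\cT)+\opt_\cI)$ bound; the running time is $n^{O(m^2t)}$ for the outer enumeration over $(\cD,\cV)$ times $(nk)^{O(m^2/\epsilon')}$ for Lemma~\ref{lemma:sub_problem_k_one_center}, i.e.\ $n^{m^2\cdot 2^{O(p)}}$ since $t=2^{O(p)}$.

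The hardest step will be the combinatorial bound $|\cV^\star|<m^2t$: unlike in Lemma~\ref{lemma:sparse}, the witness balls here must interact correctly with \emph{both} the sparsity parameter $\opt_\cI/(mt)$ and the $\beta$-lower-bound coming from the greedy loop, and the hypothesis's carefully calibrated interval for $\beta$ is precisely what makes this budget balance tight. A secondary subtlety is matching the two natural radii present in the argument, $d(\cdot,\OPT_\cI)$ (for sparsity) and $L_{\pi(o)}$ (for the $\pi$-structure): one must verify that ``far'' together with $d(o,\pi(o))\ge\delta L_{\pi(o)}$ forces $d(\pi(o),\OPT_\cI)=\Theta(L_{\pi(o)})$, and that this comparison is preserved under the disjointness computation; the constant $1/3$ inside $\cball$ and the factor $\tfrac{3}{\delta}$ inside $(1+3/\delta)^p$ both arise from this tradeoff.
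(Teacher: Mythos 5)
Your overall architecture matches the paper's: handle the early exit via $\cost_\cI(\cT')\le\cost_\cI(\cT)+m\beta$, identify one good guess $(\cD,\cV)$ among the enumerated pairs, bound the subproblem of Lemma~\ref{lemma:sub_problem_k_one_center} by exhibiting optimal centers as witnesses, and control clients outside the $L_j/3$-balls by a $\left(\frac{1+3\delta}{1-3\delta}\right)^p$ distortion. The critical divergence is the direction of your matching: you map each $o\in\OPT_\cI$ to its nearest $\pi(o)\in\cT'$ and call $o$ far when $d(o,\pi(o))\ge\delta L_{\pi(o)}$, whereas the paper classifies facilities of $\cT'$: $j$ is \emph{determined} if $d(j,\OPT_\cI)\le\delta L_j$, sets $\cD_0$ to the determined facilities with $f_j^*$ the nearest optimum to $j$ (these are provably distinct), and takes $\cV_0=\OPT_\cI\setminus\{f_j^*:j\in\cD_0\}$. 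This reversal is not cosmetic, and your combinatorial bound $|\cV^\star|<m^2t$ does not go through as sketched. The balls $\cball_{\cI,s}(\pi(o),\frac13 d(\pi(o),\OPT_\cI))$ depend only on $\pi(o)$, and $\pi$ need not be injective on far optima, so ``pairwise disjoint across far optima'' fails (many far optima can share one ball). More fundamentally, there is no source for the claimed lower bound of $\opt_\cI/(mt)$ charged to each far $o$: the greedy invariant constrains the cost of deleting a facility of $\cT'$ and says nothing about elements of $\OPT_\cI$, while sparsity gives only an \emph{upper} bound on each ball; also the sparsity-based estimate for the deletion cost is only available when the deleted facility is undetermined, which the image $\pi(o)$ of a far optimum need not be. The paper instead bounds $|\cU_0|$, the number of undetermined facilities of $\cT'$ (whence $|\cV_0|=k-|\cD_0|<|\cU_0|$ using $|\cT'|>k$), via an averaging step absent from your plan: take $j^*=\argmin_{j\in\cU_0}\sum_s C_{s,j}$, where $C_{s,j}$ is the cost of group $s$'s clients served by $j$, so that $|\cU_0|\ge m^2t$ forces $\sum_s C_{s,j^*}\le\cost_\cI(\cT')/(mt)$; then $\cost_\cI(\cT'\setminus\{j^*\})-\cost_\cI(\cT')\le\beta$ (inner ball by sparsity, using $\cball_{\cI,s}(j^*,\frac13\delta L_{j^*})\subseteq\cball_{\cI,s}(j^*,\frac13\ell_{j^*})$ because $j^*$ is undetermined; outer part by $(1+3/\delta)^p C_{s,j^*}$), contradicting the loop's termination condition.

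A second gap is a collision problem in your witness argument. Two close optima $o\ne o'$ can satisfy $\pi(o)=\pi(o')=j$, but the subproblem opens exactly one facility per block, so the witness $f_j$ can equal at most one of them; a client whose optimal center is the other may then be priced at roughly $(2\delta L_j)^p$ by $\alpha^{(s,j)}_{f_j}$ while its true optimal cost is arbitrarily small, so the min--max optimum is no longer bounded by the optimal cost on those clients. With the paper's orientation this cannot occur: for determined $j\ne j'$ one shows $d(j',f_j^*)>d(j',f_{j'}^*)$, so the map $j\mapsto f_j^*$ is injective and the only optimum of the form $f_{j''}^*$ lying in $\fball_\cI(j,\delta L_j)$ is $f_j^*$ itself --- exactly the witness used. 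Repairing your proof essentially requires switching to the paper's determined/undetermined classification.
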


\begin{proof}
If Algorithm \ref{alg:pseudotosolution} ends in Step 7, then $\cost_\cI (\cT) + m \beta$ is at most 
$\cost_\cI (\cT) + \frac{2}{t}(2\opt_{\cI} + (1+\frac{3}{\delta})^p\cost_\cI(\cT)) 
= O\left(\opt_{\cI} + \cost_\cI(\cT)\right).$
Otherwise, we run the loop. Now we show that there exist sets $\cD_0\subseteq \cT'$ and $\cV_0 \subseteq \cF$ such that $|\cV_0|<m^2 t$, $|\cD_0|+|\cV_0|=k$, and $\cS_{\cD_0,\cV_0}$ satisfies the desired properties.
For a facility $j\in \cT'$, let $L_j=d(j,\cT'\setminus \{j\})$ and $\ell_j=d(j,\OPT_{\cI})$. We say $j\in\cI$ is \emph{determined} if $\ell_j\leq \delta L_j$. Otherwise, we say $j$ is \emph{undetermined}. Let $\cD_0=\{j\in \cT': \ell_j \leq \delta L_j\}$. For $j\in \cD_0$, let $f_j^*$ be the closest facility to $j$ in $\OPT_{\cI}$. Let $\cV_0=\OPT_{\cI}\setminus \{f_j^*:j\in \cD_0\}$. 
First note that for any two distinct facilities in $j,j'\in\cD_0$, $d(j,j') \geq \max\{L_j,L_{j'}\}$. Moreover by definition, $d(j,f_j^*) \leq \delta L_j \leq \delta \max\{L_j,L_{j'}\}$. Therefore by triangle inequality, $d(j',f_j^*) \geq (1-\delta) \max\{L_j,L_{j'}\}$. Moreover by definition and because $\delta\in(0,\frac{1}{8})$, $(1-\delta) \max\{L_j,L_{j'}\} > \delta L_{j'} \geq d(j',f^*_{j'})$. Therefore $d(j',f^*_{j}) > d(j',f^*_{j'})$. Thus for any two distinct $j,j'\in \cD_0$, $f_j^* \neq f^*_{j'}$.

Therefore $|\{f_j^*:j\in \cD_0\}|=|\cD_0|$. Thus $|\cV_0|=|\OPT_\cI|-|\cD_0|=k-|\cD_0|$. Let $\cU_0=\cT'\setminus \cD_0$ be the set of undetermined facilities. Since $|\cT'|> k$, $|\cV_0|=k-|\cD_0|=k-|\cT'|+|\cU_0| < |\cU_0|$. 
We show $|\cU_0|<m^2 t$. 
For every $j \in \cT'$ and $s\in[m]$, let $A_{s,j}$ be the set of clients of group $s$ that are connected to $j$ in solution $\cT'$ and let $C_{s,j}$ be the total connection cost of these clients. Therefore $\cost_{\cI} (\cT')=\max_{s\in[m]} \sum_{j\in\cT'} C_{s,j}$.
Let $j^* := \argmin_{j\in\cU_0} \sum_{s\in[m]} C_{s,j}$. 
Let $\tilde{j}$ be the closest facility to $j^*$ in $\cT'\setminus \{j^*\}$, i.e., $d(j^*,\tilde{j})=L_{j^*}$. Then $\cost_\cI(\cT'\setminus\{j^*\})- \cost_\cI(\cT') \leq \max_{s\in[m]} \sum_{i \in A_{s,j^*}} d(i,\tilde{j})^p$.
For $s\in[m]$,
let $A_{s,j^*}^{\text{in}}:=A_{s,j^*} \cap \cball_{\cI,s}(j^*,\frac{1}{3} \delta L_{j^*})$ and $A_{s,j^*}^{\text{out}}:=A_{s,j^*} \setminus A_{s,j^*}^{\text{in}}$.
By triangle inequality, for any $i\in A_{s,j^*}^{\text{in}}$, $d(i,\tilde{j}) \leq (1+\frac{1}{3}\delta)L_{j^*}$. Moreover since $j^*$ is undetermined, $d(i,\tilde{j}) < (1+\frac{1}{3}\delta)\frac{1}{\delta}\ell_{j^*} = (\frac{1}{\delta}+\frac{1}{3})\ell_{j^*} < \frac{2}{3} \ell_{j^*}$, and for any $s\in [m]$, $\cball_{\cI,s}(j^*, \frac{1}{3} \delta L_{j^*}) \subseteq \cball_{\cI,s}(j^*, \frac{1}{3} \ell_{j^*})$. Thus
$\sum_{i \in A_{s,j^*}^{\text{in}}} d(i,\tilde{j})^p < \left(\frac{2}{3} \ell_{j^*}\right)^p |\cball_{\cI,s}(j^*, \frac{1}{3} \ell_{j^*})|.$
Therefore since $\cI$ is a $\frac{\opt_{\cI}}{mt}$-sparse instance, $\sum_{i \in A_{s,j^*}^{\text{in}}} d(i,\tilde{j})^p \leq \frac{\opt_{\cI}}{mt}$.
For $i\in \cA_{s,j^*}^{\text{out}}$, $d(i,j^*) \geq \frac{1}{3} \delta L_{j^*}$. Thus $\frac{3}{\delta} d(i,j^*) \geq L_{j^*}$ and by triangle inequality, $d(i,\tilde{j}) \leq d(i,j^*)+d(j^*,\tilde{j}) = d(i,j^*)+L_{j^*} \leq (1+\frac{3}{\delta}) d(i,j^*)$. Therefore
\begin{align}
\textstyle
\nonumber
\cost_\cI(\cT'\setminus\{j^*\})- \cost_\cI(\cT') &\leq \max_{s\in[m]}\left( \frac{\opt_{\cI}}{mt} + (1+\frac{3}{\delta})^p C_{s,j^*} \right) \\ & \label{eq:cost-remove-jstar} \textstyle\leq
\frac{\opt_{\cI}}{mt} + (1+\frac{3}{\delta})^p \sum_{s\in[m]}C_{s,j^*}.
\end{align}
By definition, $\sum_{s\in[m]} C_{s,j^*} = \min_{j\in\cU_0} \sum_{s\in[m]} C_{s,j} \leq \frac{m\cost_\cI(\cT')}{|\cU_0|}$. So if $|\cU_0|\geq m^2 t$, then $\sum_{s\in[m]} C_{s,j^*} \leq \frac{\cost_\cI(\cT')}{mt}$.
Moreover, since $|\cT \setminus \cT'| < m$,
\begin{align*}
\cost_\cI(\cT') & < \cost_\cI(\cT)+m \beta \leq \cost_\cI (\cT) + \frac{2}{t}\left(2\opt_{\cI} + (1+\frac{3}{\delta})^p\cost_\cI(\cT)\right) \\ & \leq \left(\frac{1}{1+\frac{3}{\delta}}\right)^p \opt_{\cI} + \frac{3}{2}\cost_\cI(\cT).
\end{align*}
Combining with \eqref{eq:cost-remove-jstar},
$\cost_\cI(\cT'\setminus\{j^*\})- \cost_\cI(\cT') \leq 2 \frac{\opt_{\cI}}{mt} + 
\frac{3}{2}(1+\frac{3}{\delta})^p \frac{\cost_\cI(\cT)}{mt} \leq \beta.$
This is a contradiction because $j^*$ should be removed in Step 4 of Algorithm \ref{alg:pseudotosolution}. Therefore $|\cU_0|< m^2 t$. 

Now, we need to bound the cost of $\cS_{\cD_0,\cV_0}$. For $j\in \cD_0$ and $s\in [m]$, let $i\in \cball_{\cI,s}(j, \frac{1}{3}L_j)$. By triangle inequality the distance of $i$ to any facility in $\fball_{\cI}(j,\delta L_j)$ is at most $(\frac{1}{3}+\delta) L_j$. For a facility $j'\in\cD_0$, $j'\neq j$, by triangle inequality and because $d(j,j')\geq \max\{L_j,L_{j'}\}$, the distance of $i$ to any facility in $\fball_{\cI}(j',\delta L_{j'})$ is at least 
\begin{align*}\textstyle
d(j,j')-\frac{L_j}{3}-\delta L_{j'} & 
\geq d(j,j') - (\frac{1}{3} + \delta) d(j,j')
= (\frac{2}{3} - \delta) d(j,j')
\geq (\frac{2}{3} - \delta) L_j.
\end{align*}
For $\delta < \frac{1}{8}$, we have $\frac{1}{3}+\delta < \frac{2}{3} - \delta$. Therefore, $i$ is either connected to $f_j$ or to a facility in $\cV_0$. 
Let $\alpha^{(s,j)}_{f_{j}}$'s be as defined in Algorithm \ref{alg:pseudotosolution} for $\cD_0$ and $\cV_0$.
Let $(\widetilde{f}_j: j\in \cD_0)$ be a $(1+\epsilon)$-approximate solution for the following, obtained by Lemma \ref{lemma:sub_problem_k_one_center}.
For $s\in [m]$, let $T_s=\bigcup_{j\in\cD_0} \cball_{\cI,s} (j,L_j/3)$.
Since for $j\in\cD_0$, $f^*_j\in\OPT_{\cI}$'s are also in balls $\fball_{\cI}(j,\delta L_j)$,
\[\textstyle
\max_{s\in[m]}\sum_{i\in T_s} d(i, \cS_{\cD_0,\cV_0})^p \leq (1+\epsilon) \max_{s\in[m]}\sum_{i\in T_s} d(i, \OPT_{\cI})^p.\]

Now consider a client $i\in A_s\setminus T_s$. If in the optimal solution, $i$ is connected to a facility in $\cV_0$, then by definition, $d(i,\OPT_\cI)\geq d(i,\cS_{\cD_0,\cV_0})$. Otherwise, in the optimal solution, $i$ is connected to $f_j^*\in \fball_{\cI}(j,\delta L_j)$ for some $j\in\cD_0$. We compare $d(i,\tilde{f}_j)$ to $d(i,f^*_j)$. Since $\tilde{f}_j,f_j^*\in \fball_{\cI}(j,\delta L_j)$, by triangle inequality and because $d(i,j)\geq \frac{L_j}{3}$,
$
\frac{d(i,\tilde{f}_j)^p}{d(i,f^*_j)^p} \leq \frac{(d(i,j) + \delta L_j)^p}{(d(i,j) - \delta L_j)^p} \leq \frac{(L_j/3 + \delta L_j)^p}{(L_j/3 - \delta L_j)^p} = \left(\frac{1+3 \delta}{1- 3\delta}\right)^p.
$
Thus because $\delta\leq \frac{1}{8}$, $\frac{1+3 \delta}{1- 3\delta}\leq 1+ 12\delta$. Moreover since $\delta<\frac{\log(1+\epsilon')}{12}$, $\cost_{\cI}(\cS_{\cD_0,\cV_0})\leq (1+\epsilon')^p \cdot \opt_{\cI}$. 
Finally note that the loop runs for $n^{O(m^2 t)}$ iterations because $|\cV|<m^2 t$ and $|\cT \setminus \cD| \leq m^2 t + m$. Moreover by Lemma \ref{lemma:sub_problem_k_one_center}, each iteration runs in $(nk)^{O(m^2/\epsilon)}$ time.
\end{proof}

\section{Empirical Study}
\label{sec:empirical}

In this section, we compare the performance of our algorithm against the previously best algorithms in the literature on benchmark datasets for socially fair $k$-median problem. More specifically, we compare our bicriteria algorithm with that of Abbasi-Bhaskara-Venkatasubramanian (ABV) \cite{AbbasiBV21}, and our exact algorithm (i.e., the algorithm that outputs \emph{exactly} $k$ centers) with that of Makarychev-Vakilian (MV) \cite{MakarychevV2021}. Since our bicriteria algorithm produces only a small number of extra centers (e.g., for the case of two groups, our algorithm only produces one extra center in practice --- see Section \ref{sec:app-num-cens}), we use the exhaustive search approach for our exact algorithm. Our code is written in MATLAB. We use IBM ILOG CPLEX 12.10 to solve the linear programs.

\paragraph{Datasets.} We use three benchmark datasets that have been extensively used in the fairness literature. Similar to other works in fair clustering \cite{chierichetti2017fair}, we subsample the points in the datasets. More specifically, we consider the first 500 examples in each dataset. A quick overview of the used datasets is in the following. (1) \textbf{Credit dataset} \cite{yeh2009comparisons} consists of records of 30000 individuals with 21 features. We divided the multi-categorical education
attribute to “higher educated” and “lower educated”, and used these
as the demographic groups.
(2) \textbf{COMPAS dataset} \cite{angwin2016machine} is gathered by ProPublica and contains the recidivism rates for 9898 defendants. The data is divided to two racial groups of African-Americans (AA) and Caucasians (C).
(3) \textbf{Adult dataset} \cite{kohavi1996scaling,asuncion2007uci} contains records
of 48842 individuals collected from census data, with 103 features. We consider five racial groups of “Amer-Indian-Eskim”, “AsianPac-Islander”, “Black”, “White”, and “Other” for this dataset.

\paragraph{Bicriteria approximation.} The ABV algorithm, first solves the natural linear programming relaxation of the problem and then uses the ``filtering'' technique \cite{lin1992approximation,charikar2002constant} to round the fractional solution to an integral one. Given a parameter $0<\epsilon<1$, the algorithm outputs at most $k/(1-\epsilon)$ centers and guarantees a $2/\epsilon$ approximation. In our comparison, we consider $\epsilon$ that gives almost the same number of centers as our algorithm. Tables in Section \ref{sec:app-num-cens}, summarise the number of selected centers in our experiments for different $k$'s and different parameters $\epsilon$. The $\lambda$ parameter in our algorithm (see Algorithm \ref{alg:iterative_rounding} and Lemma \ref{lem:fractional}) determines the factor of decrease in the radii of client balls in the iterative rounding algorithm. As illustrated in plots of Section \ref{sec:app-empirical-param}, the results of our algorithm do not change significantly by changing $\lambda$. So in our comparisons with other methods we fix $\lambda=0.3$. Figure \ref{fig:bicri-comparison} illustrates that our algorithm outperforms ABV on different benchmark datasets.

\begin{figure}[!h]
     \centering
     \begin{subfigure}[b]{0.32\textwidth}
         \centering
         \includegraphics[width=\textwidth]{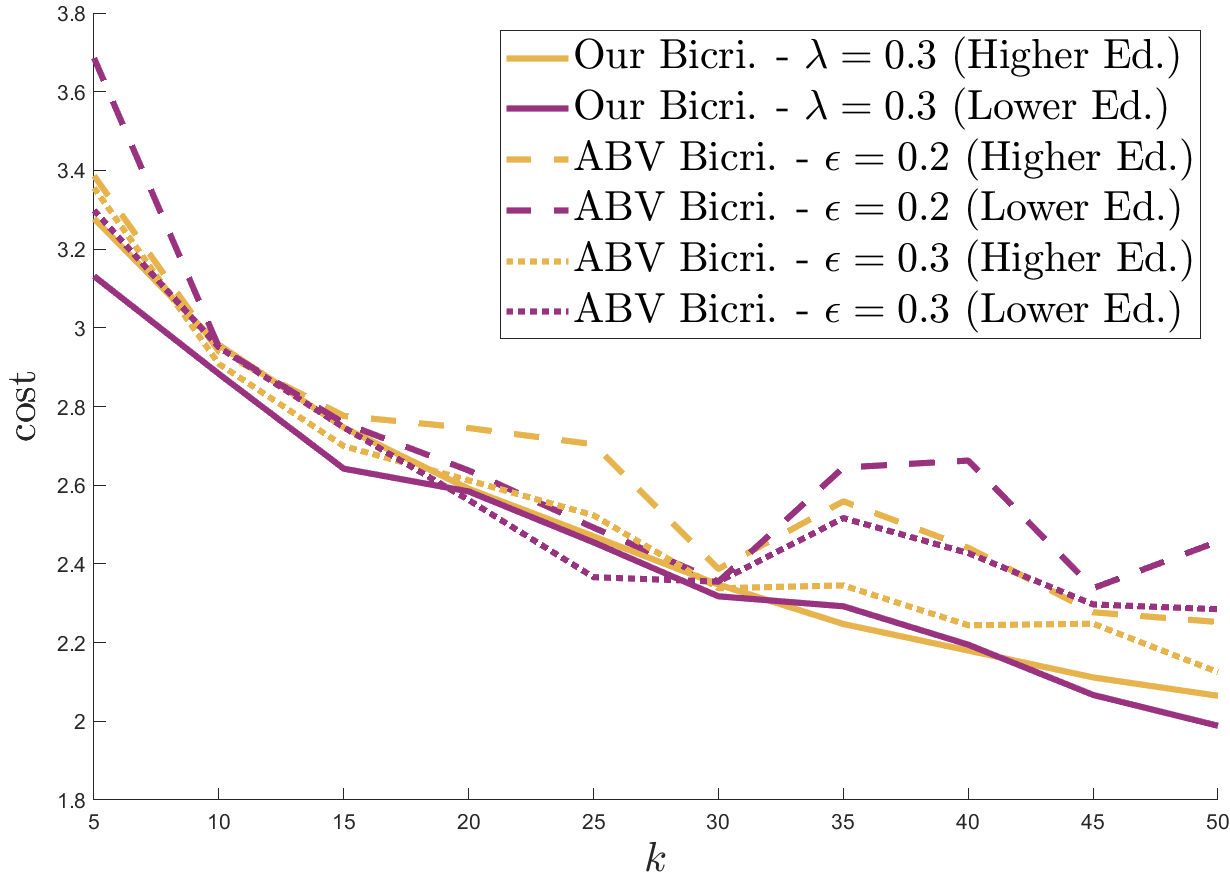}
         \vspace{-1.5em}
         \caption{Credit Dataset.}
     \end{subfigure}
     \hfill
     \begin{subfigure}[b]{0.32\textwidth}
         \centering
         \includegraphics[width=\textwidth]{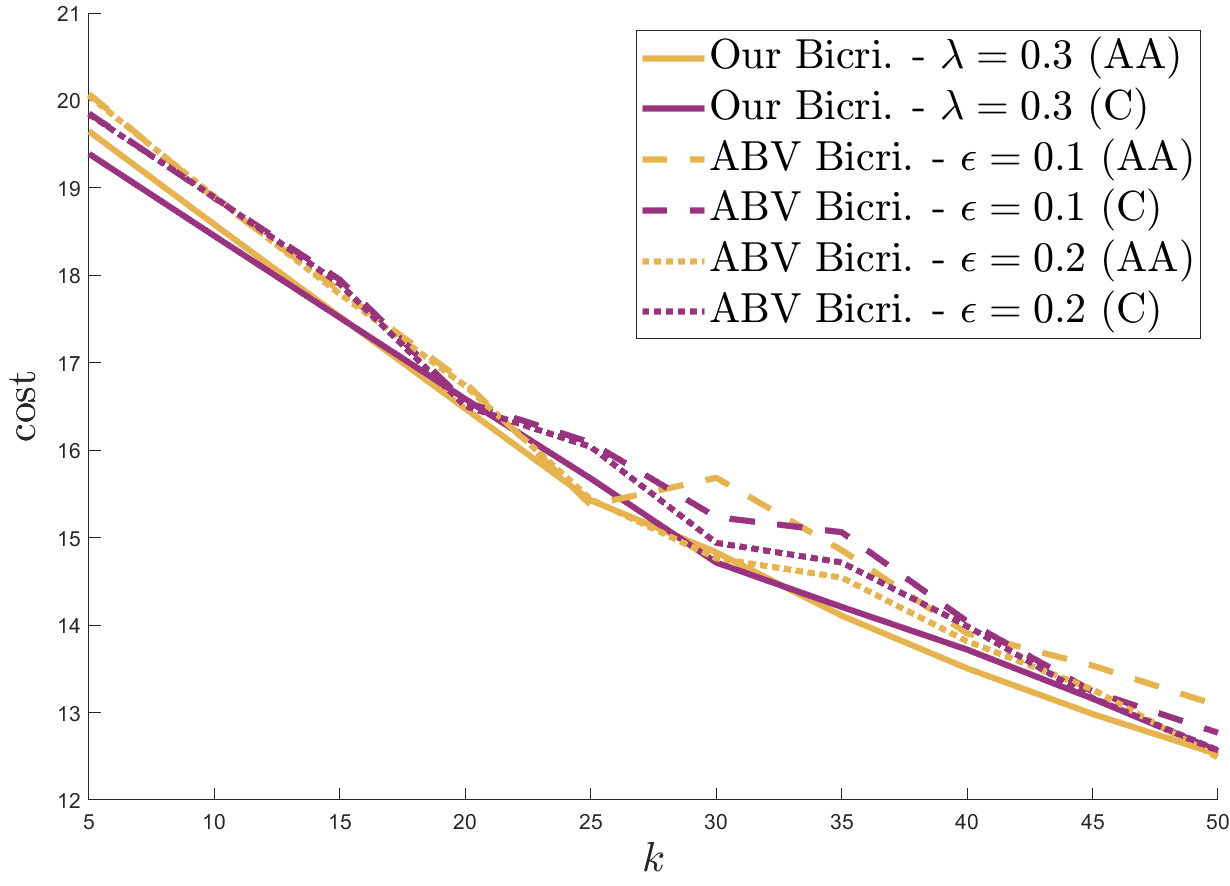}
         \vspace{-1.5em}
         \caption{COMPAS dataset.}
     \end{subfigure}
     \hfill
     \begin{subfigure}[b]{0.32\textwidth}
         \centering
         \includegraphics[width=\textwidth]{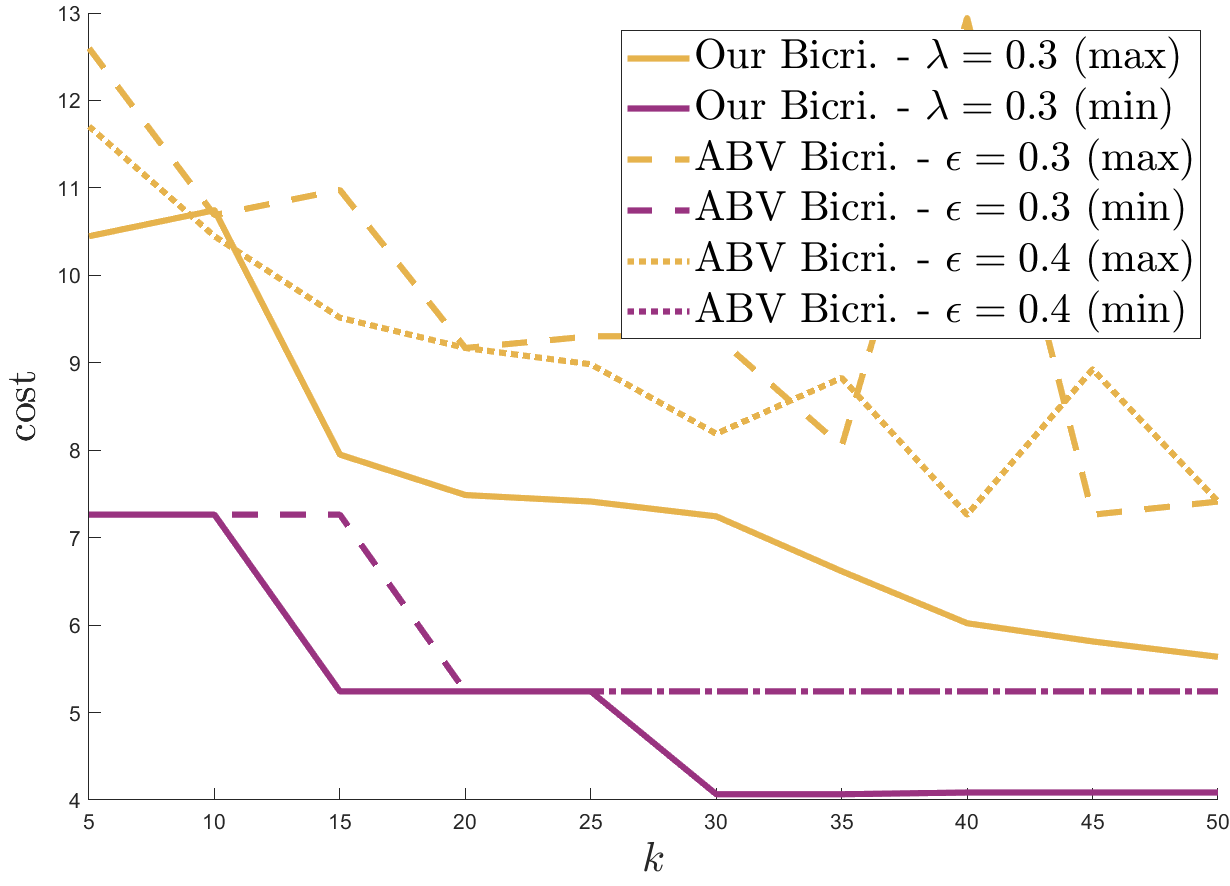}
         \vspace{-1.5em}
         \caption{Adult dataset.}
     \end{subfigure}
        \caption{\looseness=-1 Comparison of our bicriteria algorithm with ABV \cite{AbbasiBV21}. The max and min on Subfigure (c) are across the demographic groups and are used to prevent cluttering plots with 5 groups. The number of centers our algorithm selects is close to $k$ and is often smaller than ABV (see Section \ref{sec:app-num-cens}).}
        \label{fig:bicri-comparison}
\end{figure}

\paragraph{Exactly $k$ centers.} The MV algorithm, fist sparsifies the linear programming relaxation by setting the connection variables of points that are far from each other to zero. It then adopts a randomized rounding algorithm similar to \cite{charikar2002constant} based on consolidating centers and points. In the process of rounding, it produces a $(1-\gamma)$-restricted solution which is a solution where each center is either open by a fraction of at least $(1-\gamma)$ or it is zero. The algorithm needs $\gamma<0.5$. The results of MV for different values of $\gamma$ are presented in Section \ref{sec:app-empirical-param}. It appears that MV performs better for larger values of $\gamma$, so below we use $\gamma=0.1$ and $\gamma=0.4$ for our comparisons. Figure \ref{fig:exact-comparison} illustrates that our algorithms outperforms MV on different benchmark datasets.

\begin{figure}[!h]
     \centering
     \begin{subfigure}[b]{0.32\textwidth}
         \centering
         \includegraphics[width=\textwidth]{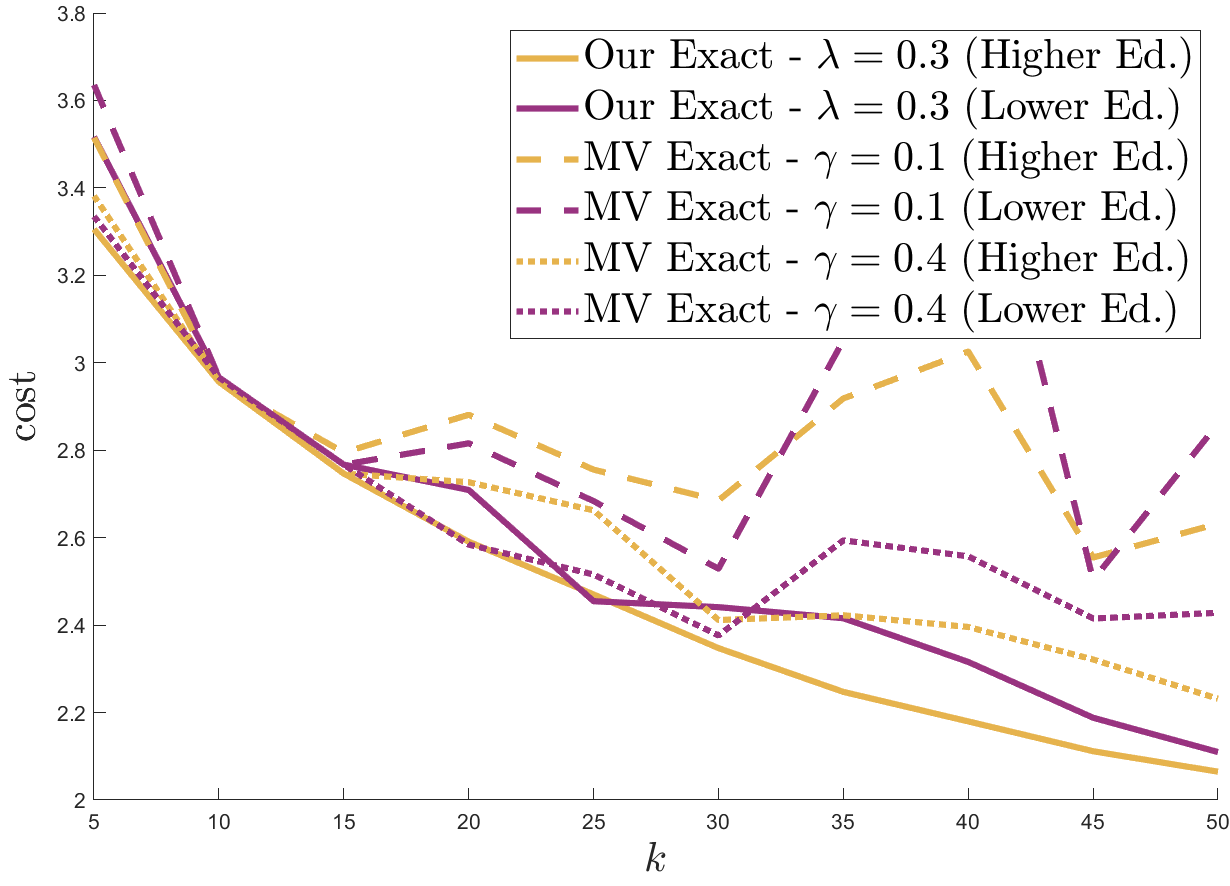}
         \vspace{-1.5em}
         \caption{Credit Dataset.}
     \end{subfigure}
     \hfill
     \begin{subfigure}[b]{0.32\textwidth}
         \centering
         \includegraphics[width=\textwidth]{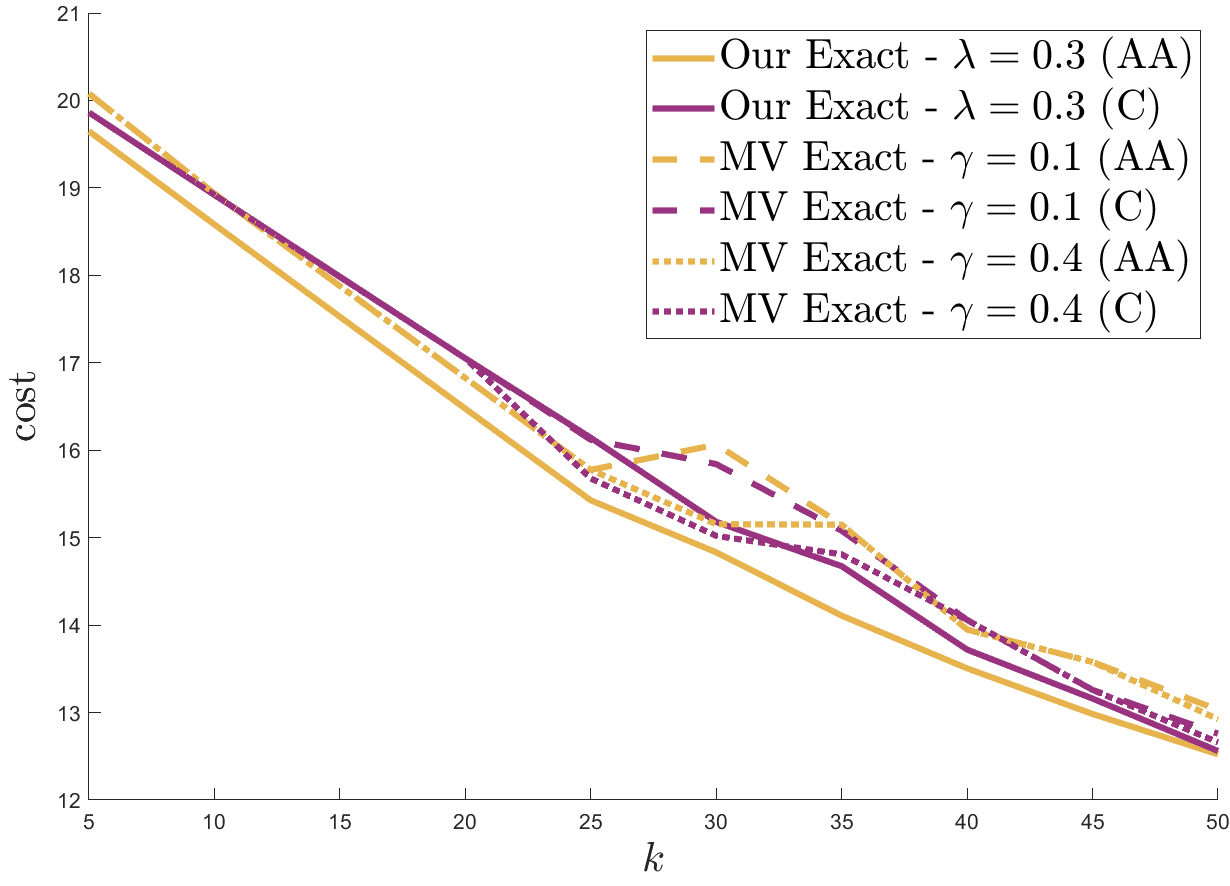}
         \vspace{-1.5em}
         \caption{COMPAS dataset.}
     \end{subfigure}
     \hfill
     \begin{subfigure}[b]{0.32\textwidth}
         \centering
         \includegraphics[width=\textwidth]{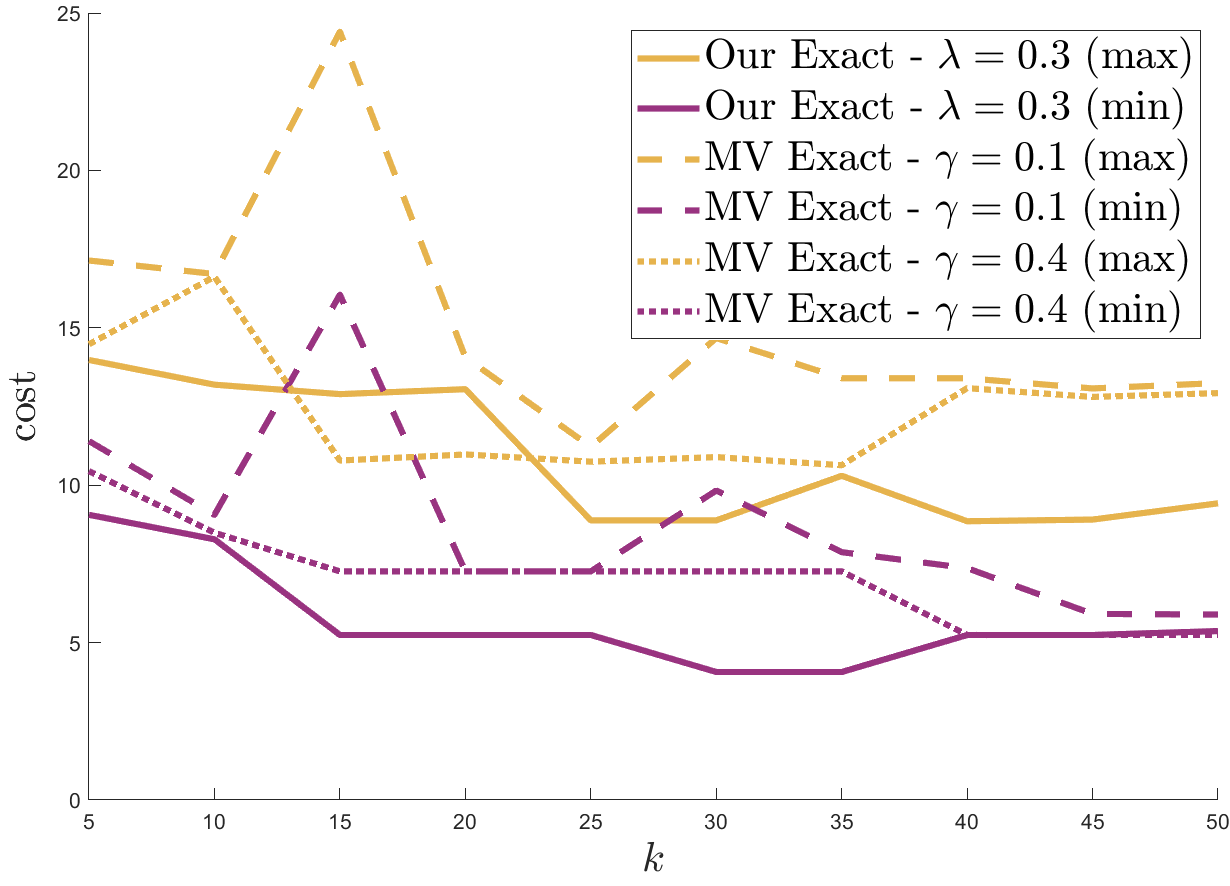}
         \vspace{-1.5em}
         \caption{Adult dataset.}
     \end{subfigure}
        \caption{Comparison of our algorithm with exactly $k$ centers with MV \cite{MakarychevV2021}. The max and min on Subfigure (c) are across the groups  and are used to prevent cluttering plots with 5 groups.}
        \label{fig:exact-comparison}
\end{figure}

\section{Acknowledgements}

This research is supported in part by NSF awards CCF-1909756,
CCF-2007443 and CCF-2134105.

\bibliographystyle{plain}
\bibliography{main}
\newpage

\newpage

\appendix
\section{Exhaustive Search}
\begin{lemma}[\cite{goyal2021fpt}]
\label{lemma:exhaustive_search}
Let $k'>k$ and $S$ be a set of centers of size $k'$ and cost $C$ for the socially fair $(\ell_p, k)$-clustering problem with $m$ groups. Let $T\subset S$ be a set of size $k$ with minimum cost among all subsets of size $k$ of $S$. Then the cost of $T$ is less than or equal to $3^{p-1} (C+2\opt)$ where $\opt$ is the cost of the optimal solution.
\end{lemma}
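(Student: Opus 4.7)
The plan is to exhibit a specific size-$k$ subset $T^\star\subseteq S$ with cost at most $3^{p-1}(C+2\opt)$, and then use the hypothesis that $T$ is the minimum-cost size-$k$ subset of $S$ to conclude $\cost(T)\leq \cost(T^\star)$. To construct $T^\star$, fix an optimal solution $O$ of size $k$ and, for each $o\in O$, let $\phi(o)\in S$ be a closest point to $o$ in $S$. Take $T^\star$ to be any size-$k$ subset of $S$ containing $\phi(O)$, which exists because $|\phi(O)|\leq k\leq k'=|S|$. Opening extra facilities only decreases cost, so it suffices to bound the cost under the assumption that exactly $\phi(O)$ is opened.

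For a client $i$ in group $s$, let $o_i\in O$ be its nearest optimal center and $s_i\in S$ its nearest center in $S$. Since $\phi(o_i)$ is by definition the center in $S$ closest to $o_i$, we have $d(o_i,\phi(o_i))\leq d(o_i,s_i)$, and two applications of the triangle inequality give
\[
d(i,T^\star)\leq d(i,\phi(o_i))\leq d(i,o_i)+d(o_i,\phi(o_i))\leq d(i,o_i)+d(o_i,s_i)\leq 2\,d(i,o_i)+d(i,s_i).
\]
The elementary power-mean inequality $(a+b+c)^p\leq 3^{p-1}(a^p+b^p+c^p)$, valid for $p\geq 1$, then yields $d(i,T^\star)^p\leq 3^{p-1}\bigl(2\,d(i,O)^p+d(i,S)^p\bigr)$.

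Multiplying by $w_s(i)$ and summing over $i\in A_s$ shows that the cost of $T^\star$ for group $s$ is at most $3^{p-1}\bigl(2\sum_{i\in A_s}w_s(i)d(i,O)^p+\sum_{i\in A_s}w_s(i)d(i,S)^p\bigr)\leq 3^{p-1}(2\opt+C)$, since $O$ has group-$s$ cost at most $\opt$ and $S$ has group-$s$ cost at most $C$. Taking the maximum over $s\in[m]$ gives $\cost(T^\star)\leq 3^{p-1}(C+2\opt)$, and the lemma follows from $\cost(T)\leq \cost(T^\star)$. There is no substantive obstacle: the argument is a routine triangle-inequality and power-mean calculation. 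The only mild subtlety is that the same upper bound $3^{p-1}(2\opt+C)$ is obtained for every group $s$, which is precisely what allows the max over groups to pass through without any loss.
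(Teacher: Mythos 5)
Your proof is correct and follows essentially the same argument as the paper: map each optimal center to its nearest center in $S$, bound each client's distance by $2d(i,O)+d(i,S)$ via the triangle inequality, and apply the power-mean inequality before summing per group and taking the max. The only cosmetic difference is that you pad $\phi(O)$ up to a size-$k$ set, whereas the paper simply observes $|T'|\leq k$ suffices; this changes nothing.
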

\begin{proof}
Let $\OPT$ be an optimal set of centers. For each center $o\in\OPT$, let $s_o$ be the closest center in $S$ to $o$, i.e., $s_o := \argmin_{s\in S} d(s, o)$. Let $T' := \{s_o : o \in \OPT\}$. Because the size of $\OPT$ is $k$, $|T'|\leq k$. We show that the cost of $T'$ is less than or equal to $3^{p-1} (C+2\opt)$. The result follows from this because $T'\subset S$ and $|T'|\leq k$.

Let $i$ be a client and $o_i$ be the closest facility in $\OPT$ to $i$. Let $t'_i$ be the closest facility to $o_i$ in $T'$ which means $t'_i$ is also the closest facility to $o_i$ in $S$. Moreover let $s_i$ be the closest facility to $i$ in $S$. By triangle inequality, $d(i, t'_i) \leq d(i, o_i) + d(o_i, t'_i)$. By definition of $t'_i$, $d(o_i, t'_i) \leq d(o_i, s_i)$. Therefore $d(i,t_i') \leq d(i, o_i) + d(o_i, s_i)$. Moreover by triangle inequality $d(o_i, s_i) \leq d(i, o_i) + d(i, s_i)$. Therefore $d(i,t_i') \leq 2 d(i, o_i) + d(i, s_i)$. Taking both sides to the power of $p$ and using the power mean inequality, i.e., $(x+y+z)^p\leq 3^{p-1} (x^p + y^p + z^p)$, we conclude $d(i,t_i')^p \leq 3^{p-1} (2 d(i, o_i)^p + d(i, s_i)^p)$. The result follows from summing such inequality for each group and taking the maximum over groups.
\end{proof}

\newpage

\section{Omitted Empirical Results}
\label{sec:app-empirical-result}

\subsection{Results of Different Algorithms for Different Parameters}
\label{sec:app-empirical-param}

\begin{figure}[!h]
     \centering
     \begin{subfigure}[b]{0.32\textwidth}
         \centering
         \includegraphics[width=\textwidth]{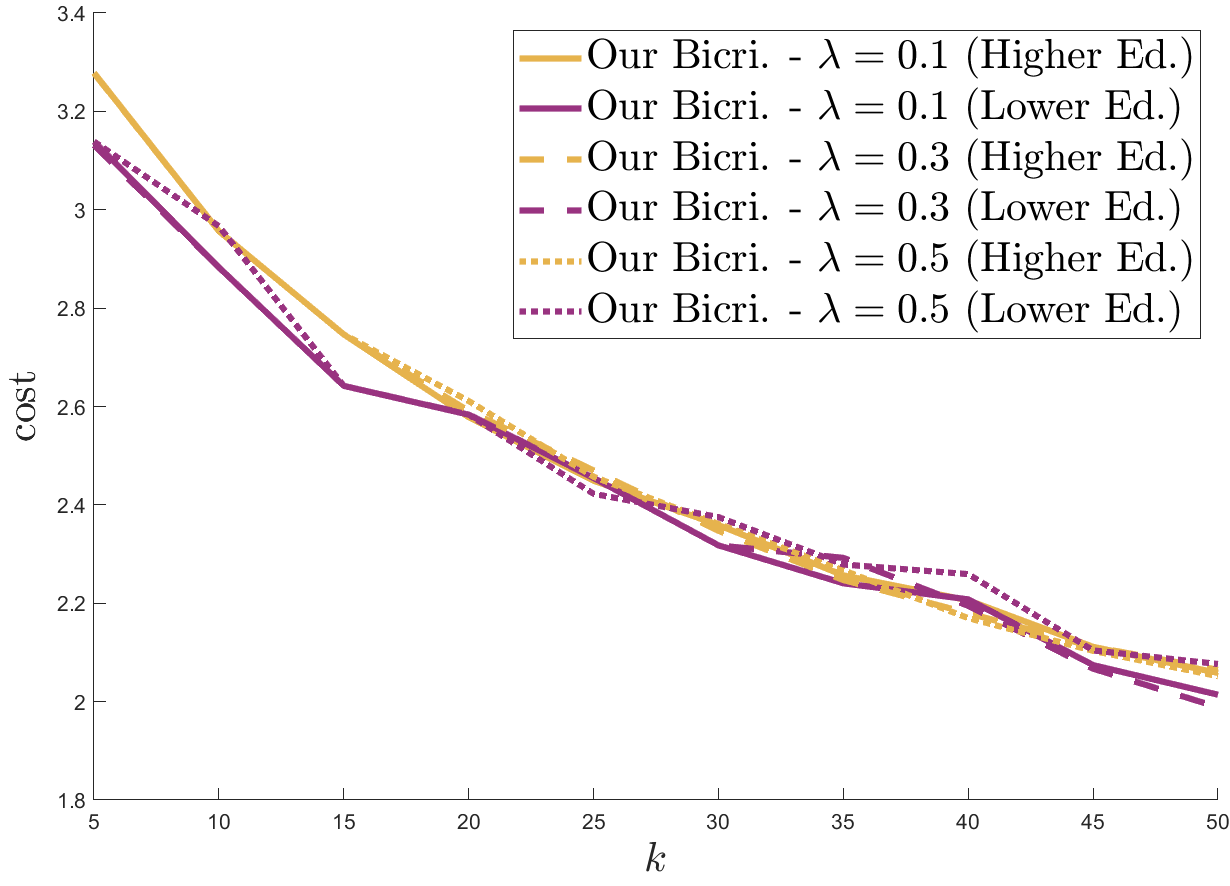}
         \vspace{-1.5em}
         \caption{Credit Dataset.}
     \end{subfigure}
     \hfill
     \begin{subfigure}[b]{0.32\textwidth}
         \centering
         \includegraphics[width=\textwidth]{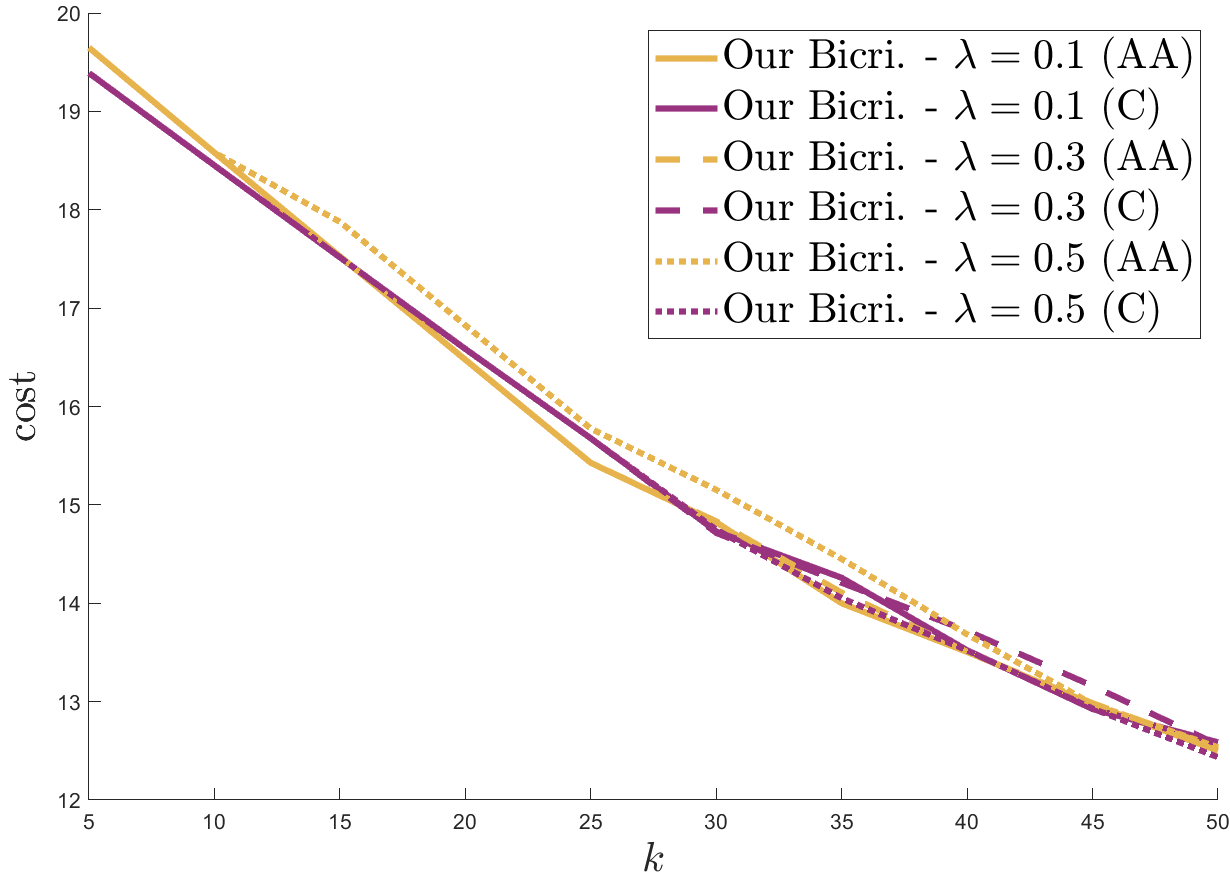}
         \vspace{-1.5em}
         \caption{COMPAS dataset.}
     \end{subfigure}
     \hfill
     \begin{subfigure}[b]{0.32\textwidth}
         \centering
         \includegraphics[width=\textwidth]{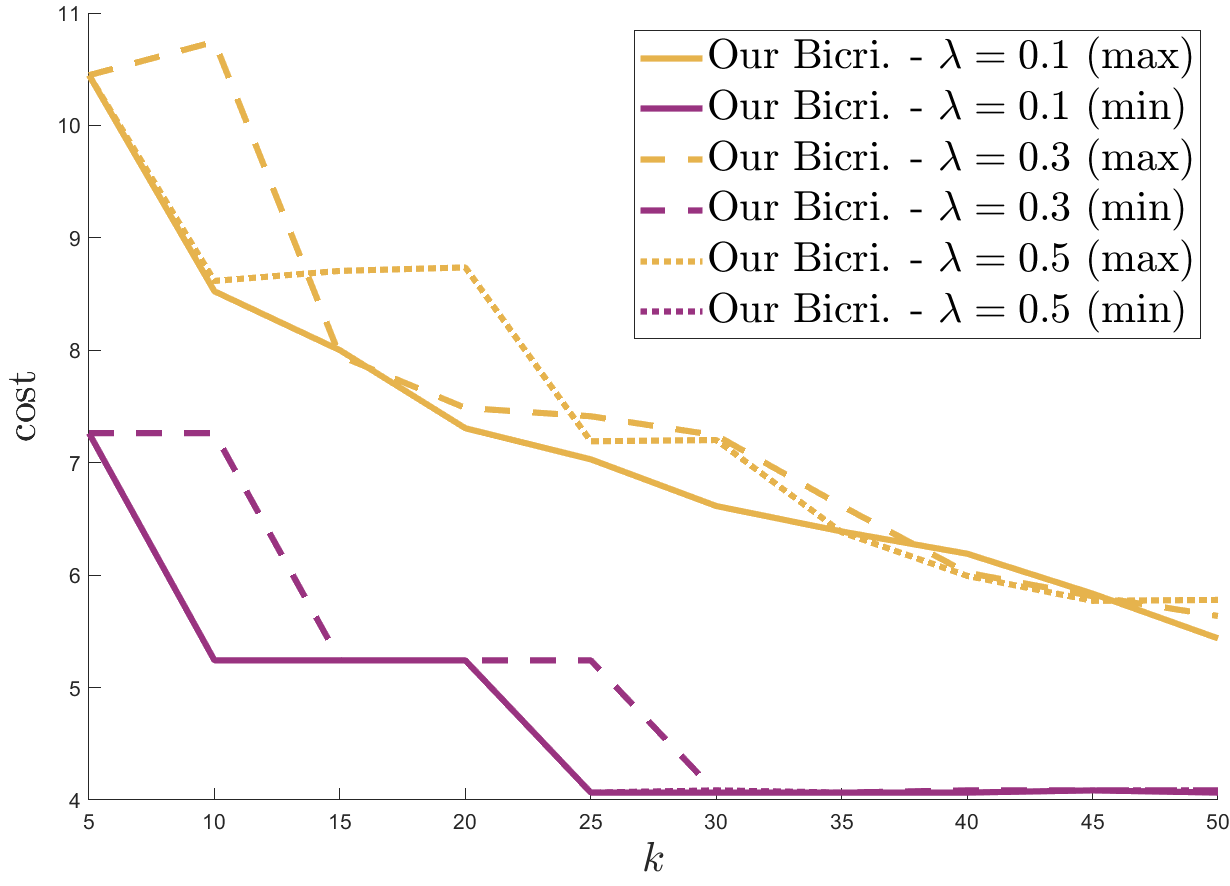}
         \vspace{-1.5em}
         \caption{Adult dataset.}
     \end{subfigure}
        \caption{Performance of our bicriteria algorithm for different values of $\lambda$. The max and min on Subfigure (c) are across the demographic groups.}
        \label{fig:our-bi-param}
\end{figure}

\begin{figure}[!h]
     \centering
     \begin{subfigure}[b]{0.32\textwidth}
         \centering
         \includegraphics[width=\textwidth]{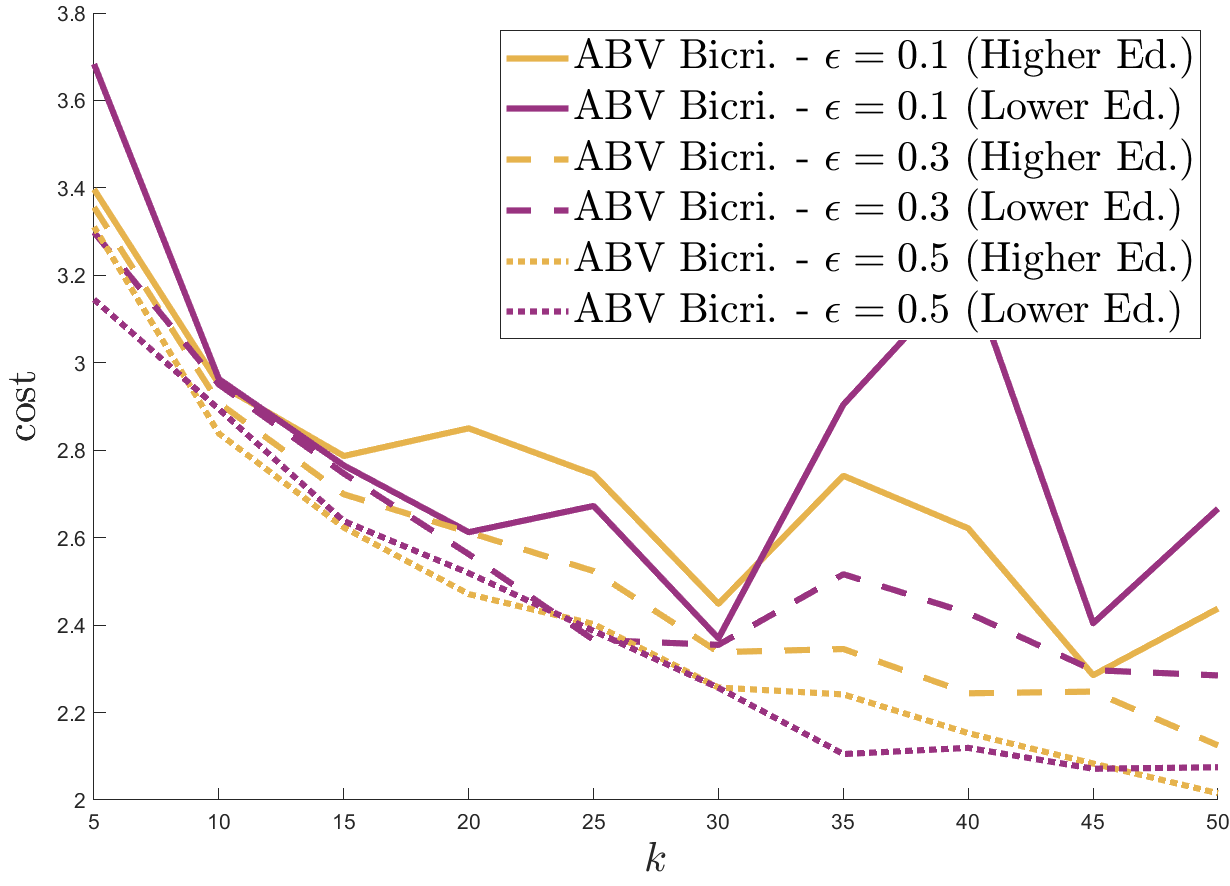}
         \vspace{-1.5em}
         \caption{Credit Dataset.}
     \end{subfigure}
     \hfill
     \begin{subfigure}[b]{0.32\textwidth}
         \centering
         \includegraphics[width=\textwidth]{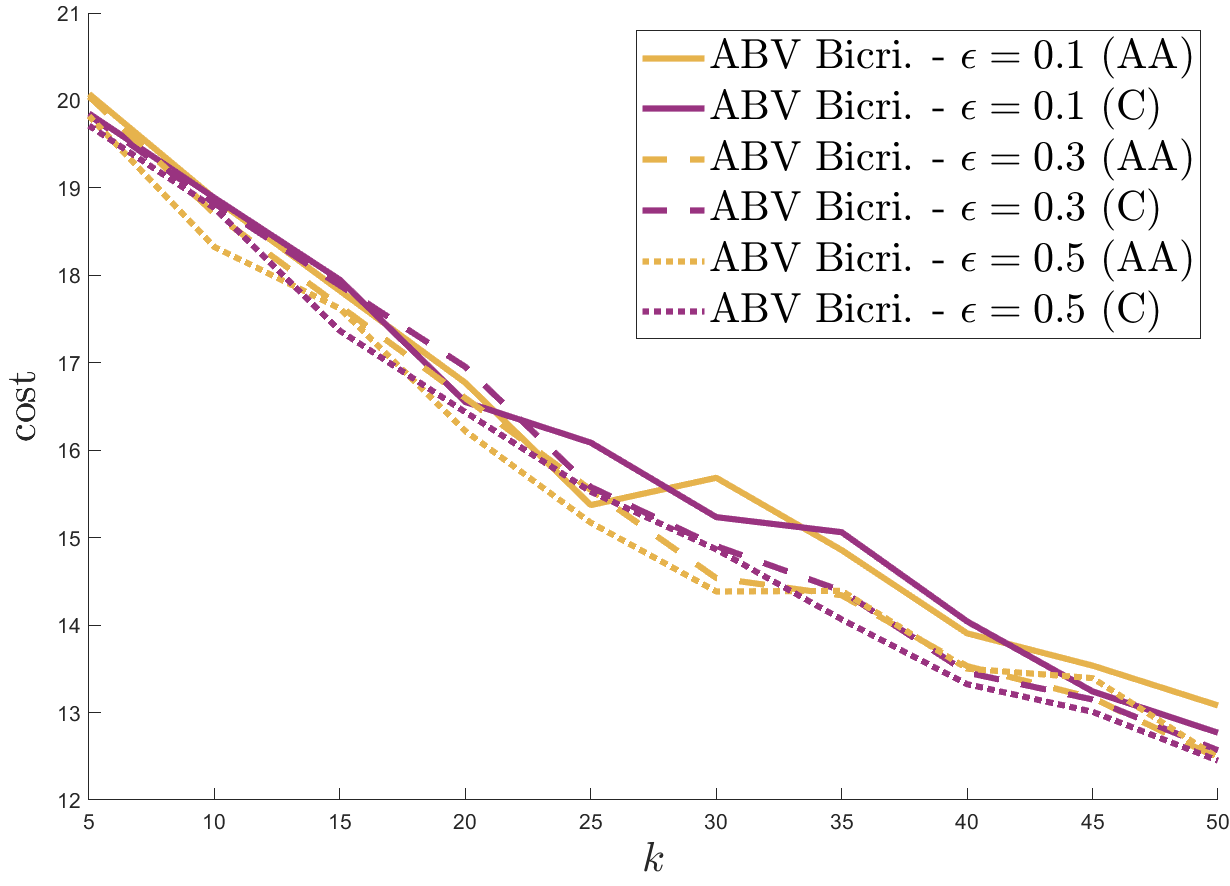}
         \vspace{-1.5em}
         \caption{COMPAS dataset.}
     \end{subfigure}
     \hfill
     \begin{subfigure}[b]{0.32\textwidth}
         \centering
         \includegraphics[width=\textwidth]{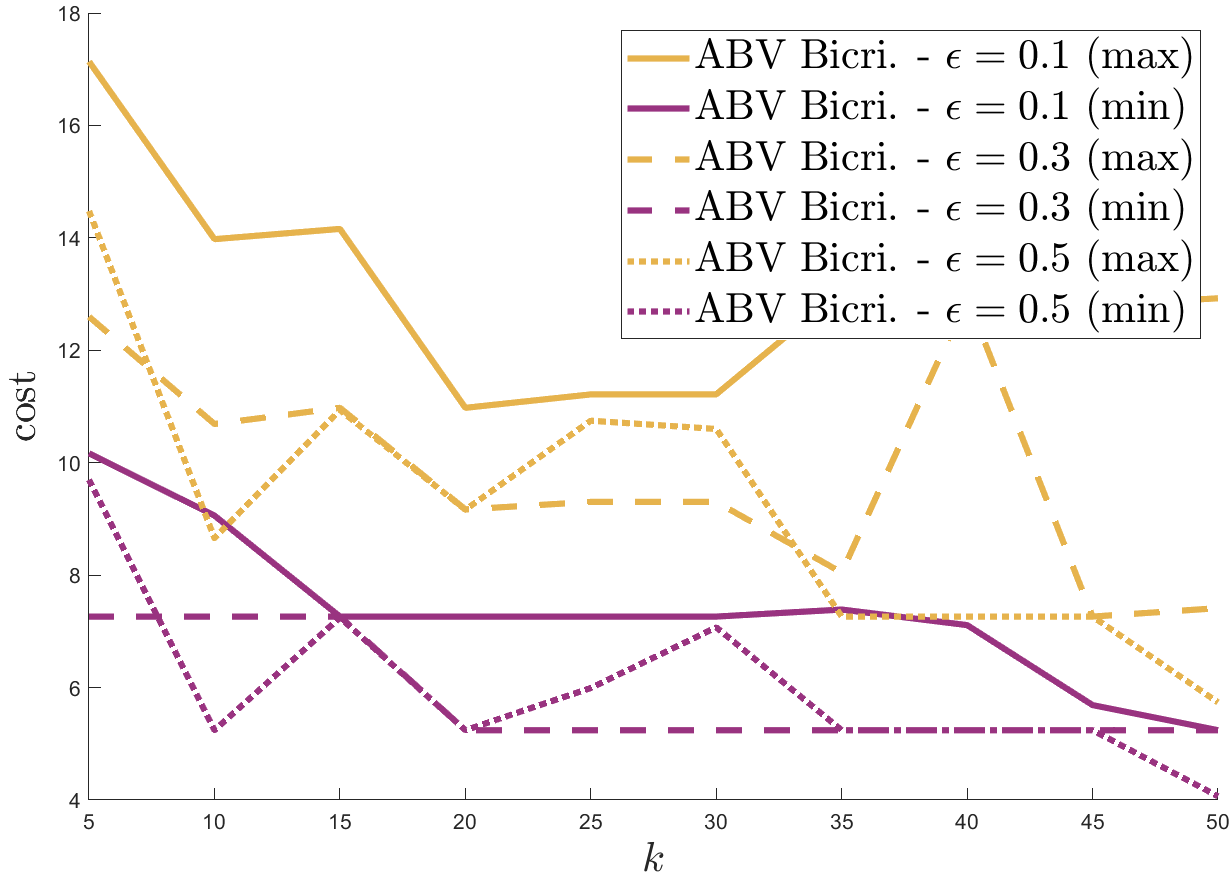}
         \vspace{-1.5em}
         \caption{Adult dataset.}
     \end{subfigure}
        \caption{Performance of our bicriteria algorithm of ABV \cite{AbbasiBV21} for different values of $\epsilon$. The max and min on Subfigure (c) are across the demographic groups.}
        \label{fig:their-bi-param}
\end{figure}

\begin{figure}[!h]
     \centering
     \begin{subfigure}[b]{0.32\textwidth}
         \centering
         \includegraphics[width=\textwidth]{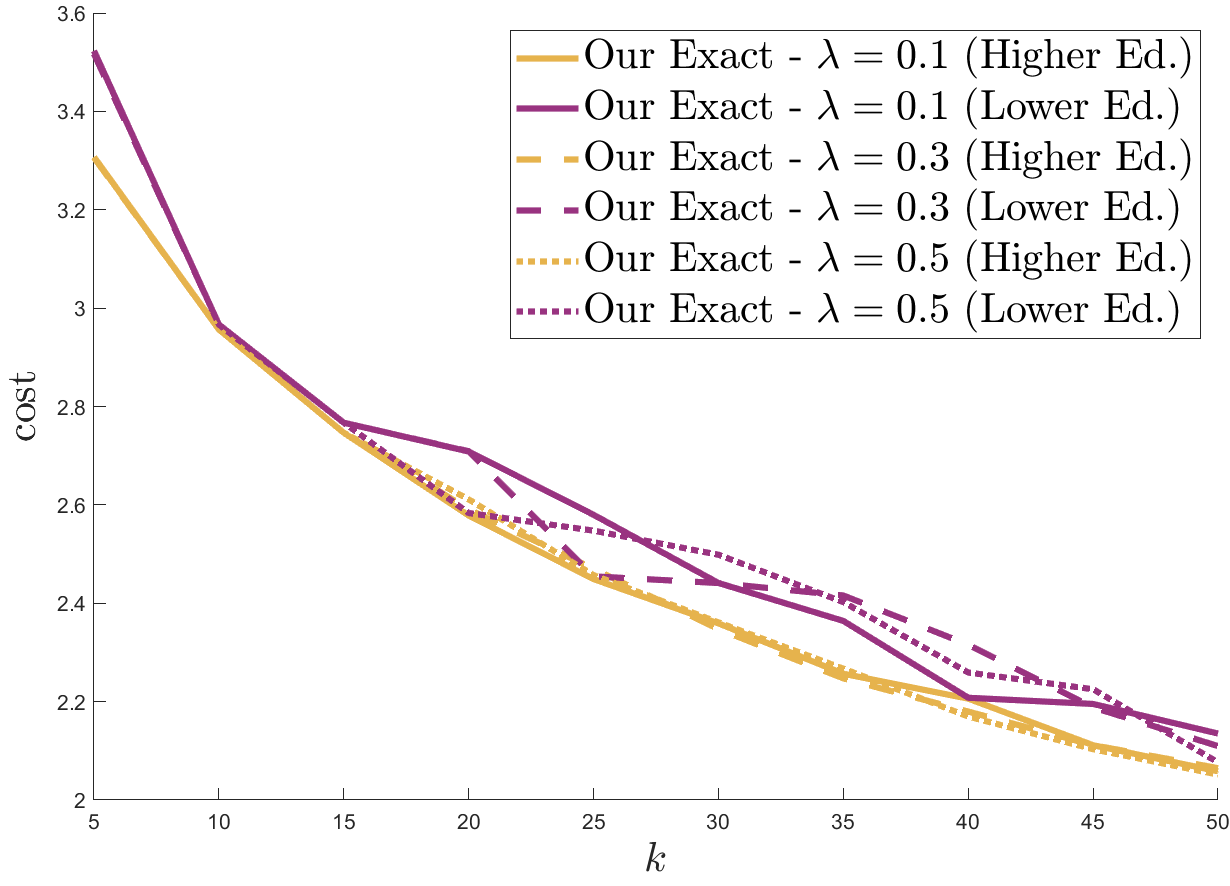}
         \vspace{-1.5em}
         \caption{Credit Dataset.}
     \end{subfigure}
     \hfill
     \begin{subfigure}[b]{0.32\textwidth}
         \centering
         \includegraphics[width=\textwidth]{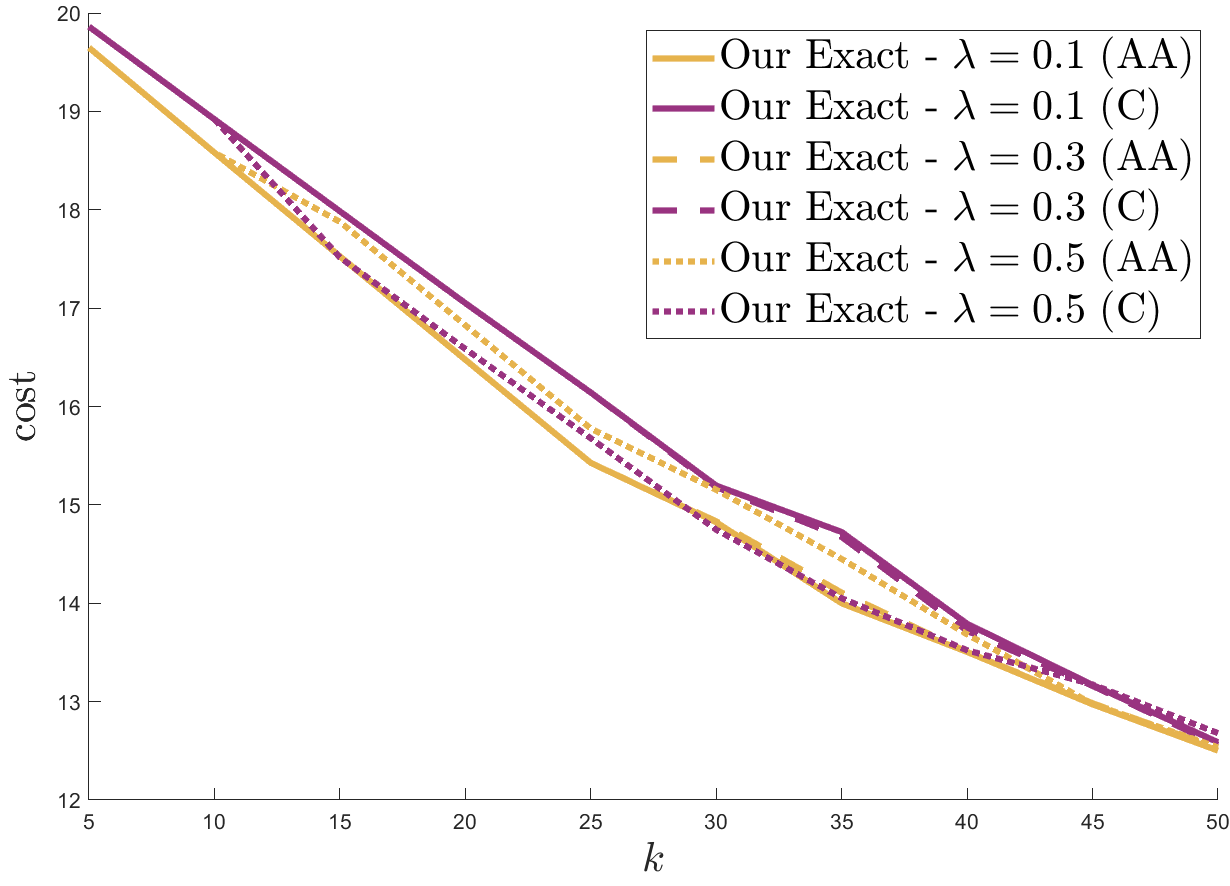}
         \vspace{-1.5em}
         \caption{COMPAS dataset.}
     \end{subfigure}
     \hfill
     \begin{subfigure}[b]{0.32\textwidth}
         \centering
         \includegraphics[width=\textwidth]{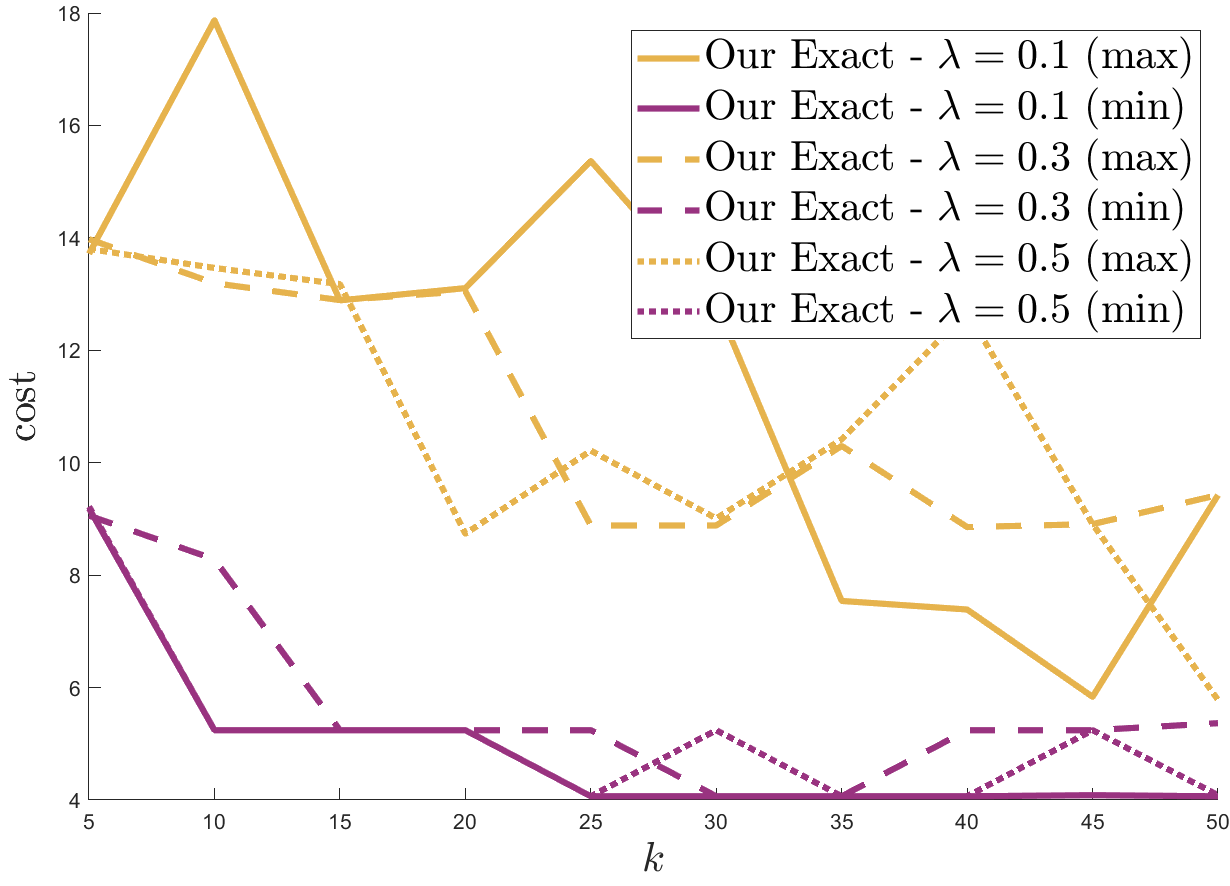}
         \vspace{-1.5em}
         \caption{Adult dataset.}
     \end{subfigure}
        \caption{Performance of our algorithm with exactly $k$ centers for different values of $\lambda$. The max and min on Subfigure (c) are across the demographic groups.}
        \label{fig:our-exact-param}
\end{figure}

\begin{figure}[!h]
     \centering
     \begin{subfigure}[b]{0.32\textwidth}
         \centering
         \includegraphics[width=\textwidth]{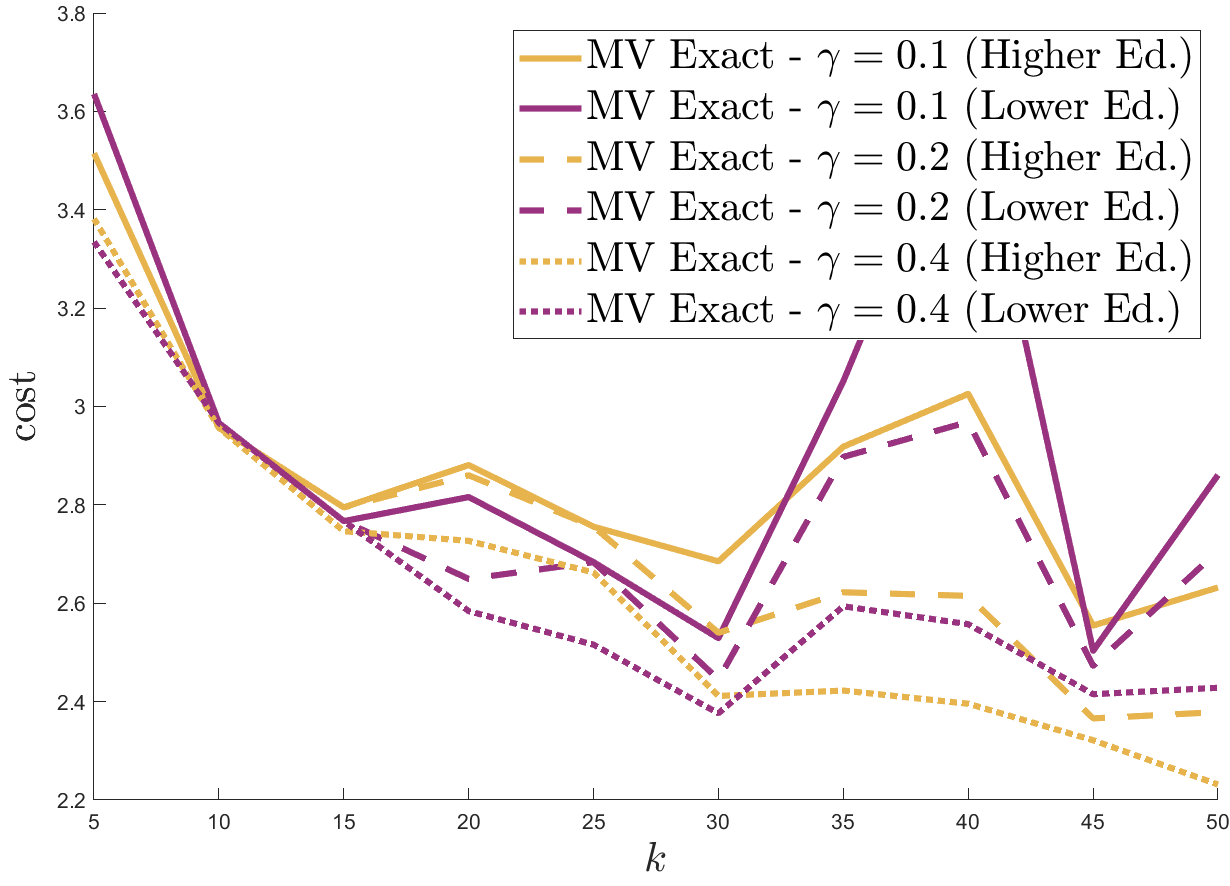}
         \vspace{-1.5em}
         \caption{Credit Dataset.}
     \end{subfigure}
     \hfill
     \begin{subfigure}[b]{0.32\textwidth}
         \centering
         \includegraphics[width=\textwidth]{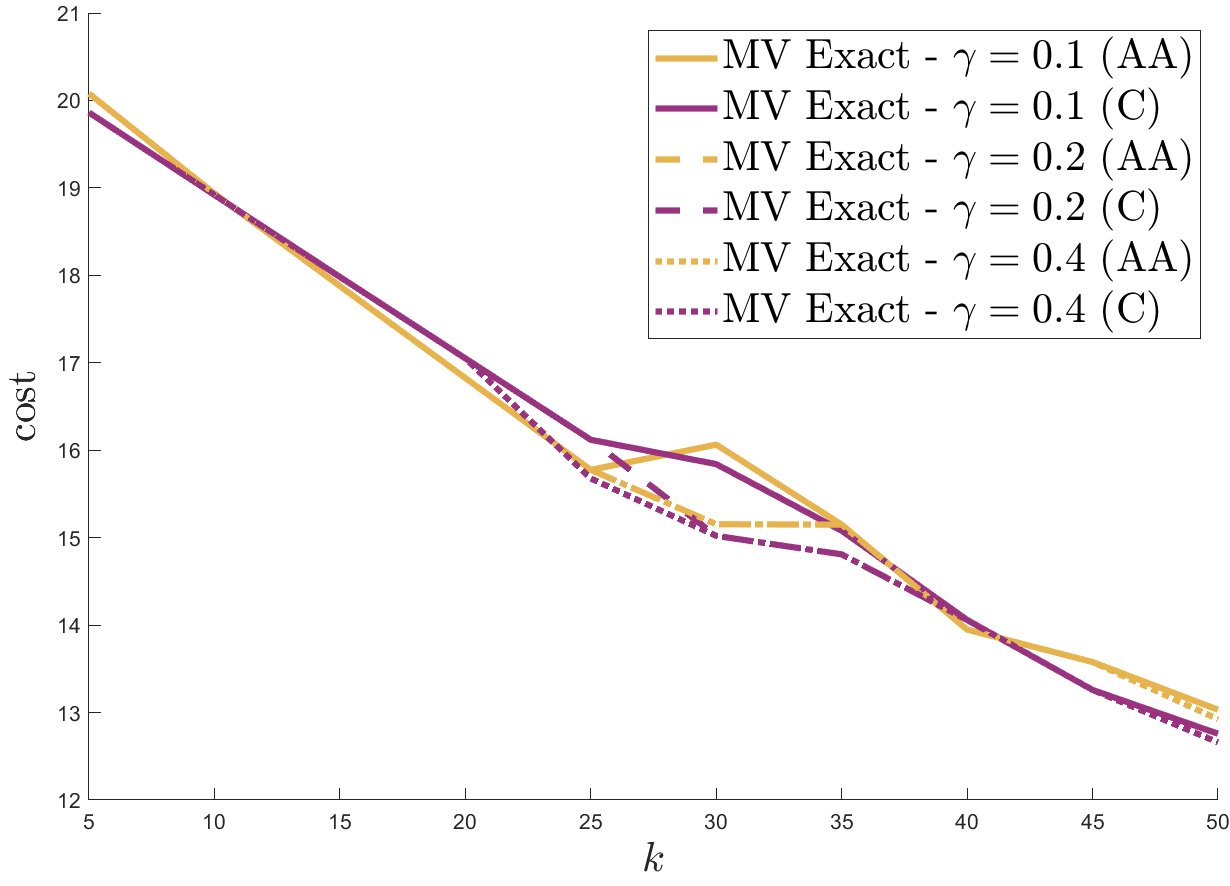}
         \vspace{-1.5em}
         \caption{COMPAS dataset.}
     \end{subfigure}
     \hfill
     \begin{subfigure}[b]{0.32\textwidth}
         \centering
         \includegraphics[width=\textwidth]{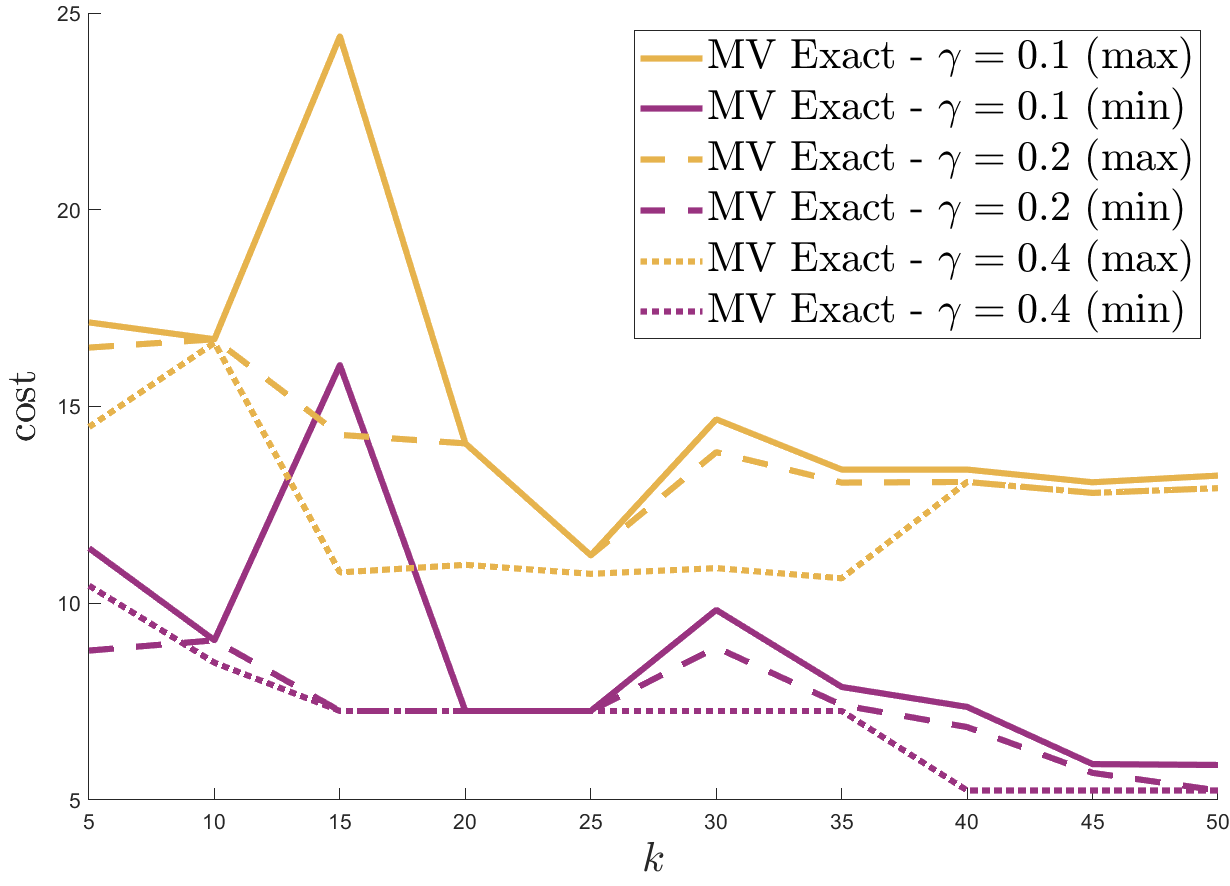}
         \vspace{-1.5em}
         \caption{Adult dataset.}
     \end{subfigure}
        \caption{Performance of our MV algorithm \cite{MakarychevV2021} for different values of $\gamma$. The max and min on Subfigure (c) are across the demographic groups.}
        \label{fig:their-exact-param}
\end{figure}

\newpage

\subsection{Number of Selected Centers in The Bicriteria Algorithms}
\label{sec:app-num-cens}

\begin{table}[!h]
\centering
\caption{The number of selected centers for our bicriteria algorithm on the Credit dataset. $\lambda$ is a parameter of the algorithm and denotes the amount of decrease in radii of balls around the clients in the iterative rounding algorithm.}
\label{table:credit-our-bi-num-cens}
\begin{tabular}{|l|r|r|r|r|r|}
\hline
       & \multicolumn{1}{l|}{$\lambda=0.1$} & \multicolumn{1}{l|}{$\lambda=0.2$} & \multicolumn{1}{l|}{$\lambda=0.3$} & \multicolumn{1}{l|}{$\lambda=0.4$} & \multicolumn{1}{l|}{$\lambda=0.5$} \\ \hline
$k=5$  & 6                                  & 6                                  & 6                                  & 6                                  & 6                                  \\ \hline
$k=10$ & 11                                 & 11                                 & 11                                 & 10                                 & 11                                 \\ \hline
$k=15$ & 16                                 & 16                                 & 16                                 & 15                                 & 15                                 \\ \hline
$k=20$ & 21                                 & 21                                 & 21                                 & 20                                 & 20                                 \\ \hline
$k=25$ & 26                                 & 26                                 & 26                                 & 25                                 & 25                                 \\ \hline
$k=30$ & 31                                 & 31                                 & 31                                 & 30                                 & 30                                 \\ \hline
$k=35$ & 36                                 & 36                                 & 36                                 & 35                                 & 35                                 \\ \hline
$k=40$ & 41                                 & 40                                 & 40                                 & 40                                 & 40                                 \\ \hline
$k=45$ & 46                                 & 45                                 & 45                                 & 46                                 & 46                                 \\ \hline
$k=50$ & 50                                 & 50                                 & 50                                 & 51                                 & 51                                 \\ \hline
\end{tabular}
\end{table}

\begin{table}[!h]
\centering
\caption{The number of selected centers for bicriteria algorithm of Abbasi-Bhaskara-Venkatasubramanian \cite{AbbasiBV21} on the Credit dataset. $\epsilon$ is a parameter of the algorithm. The maximum number of selected centers is $k/(1-\epsilon)$ which achieves a $2/\epsilon$ approximation factor.}
\label{table:credit-their-bi-num-cens}
\begin{tabular}{|l|r|r|r|r|r|}
\hline
       & \multicolumn{1}{l|}{$\epsilon=0.1$} & \multicolumn{1}{l|}{$\epsilon=0.2$} & \multicolumn{1}{l|}{$\epsilon=0.3$} & \multicolumn{1}{l|}{$\epsilon=0.4$} & \multicolumn{1}{l|}{$\epsilon=0.5$} \\ \hline
$k=5$  & 5                                   & 6                                   & 7                                   & 8                                   & 10                                  \\ \hline
$k=10$ & 11                                  & 12                                  & 14                                  & 16                                  & 20                                  \\ \hline
$k=15$ & 16                                  & 18                                  & 21                                  & 23                                  & 29                                  \\ \hline
$k=20$ & 20                                  & 23                                  & 26                                  & 29                                  & 34                                  \\ \hline
$k=25$ & 22                                  & 27                                  & 31                                  & 35                                  & 39                                  \\ \hline
$k=30$ & 29                                  & 34                                  & 37                                  & 41                                  & 46                                  \\ \hline
$k=35$ & 25                                  & 29                                  & 36                                  & 40                                  & 45                                  \\ \hline
$k=40$ & 26                                  & 30                                  & 38                                  & 43                                  & 48                                  \\ \hline
$k=45$ & 36                                  & 40                                  & 46                                  & 53                                  & 59                                  \\ \hline
$k=50$ & 38                                  & 45                                  & 51                                  & 58                                  & 64                                  \\ \hline
\end{tabular}
\end{table}

\begin{table}[!h]
\centering
\caption{The number of selected centers for our bicriteria algorithm on the COMPAS dataset. $\lambda$ is a parameter of the algorithm and denotes the amount of decrease in radii of balls around the clients in the iterative rounding algorithm.}
\label{table:compas-our-bi-num-cens}
\begin{tabular}{|l|r|r|r|r|r|}
\hline
       & \multicolumn{1}{l|}{$\lambda=0.1$} & \multicolumn{1}{l|}{$\lambda=0.2$} & \multicolumn{1}{l|}{$\lambda=0.3$} & \multicolumn{1}{l|}{$\lambda=0.4$} & \multicolumn{1}{l|}{$\lambda=0.5$} \\ \hline
$k=5$  & 6                                  & 6                                  & 6                                  & 6                                  & 6                                  \\ \hline
$k=10$ & 11                                 & 10                                 & 11                                 & 11                                 & 10                                 \\ \hline
$k=15$ & 16                                 & 16                                 & 16                                 & 16                                 & 16                                 \\ \hline
$k=20$ & 21                                 & 21                                 & 21                                 & 20                                 & 20                                 \\ \hline
$k=25$ & 26                                 & 26                                 & 25                                 & 26                                 & 26                                 \\ \hline
$k=30$ & 31                                 & 31                                 & 31                                 & 31                                 & 31                                 \\ \hline
$k=35$ & 36                                 & 36                                 & 36                                 & 36                                 & 36                                 \\ \hline
$k=40$ & 40                                 & 41                                 & 41                                 & 41                                 & 40                                 \\ \hline
$k=45$ & 46                                 & 46                                 & 46                                 & 46                                 & 46                                 \\ \hline
$k=50$ & 51                                 & 51                                 & 51                                 & 51                                 & 50                                 \\ \hline
\end{tabular}
\end{table}

\begin{table}[]
\centering
\caption{The number of selected centers for bicriteria algorithm of Abbasi-Bhaskara-Venkatasubramanian \cite{AbbasiBV21} on the COMPAS dataset. $\epsilon$ is a parameter of the algorithm. The maximum number of selected centers is $k/(1-\epsilon)$ which achieves a $2/\epsilon$ approximation factor.}
\label{table:compas-their-bi-num-cens}
\begin{tabular}{|l|r|r|r|r|r|}
\hline
       & \multicolumn{1}{l|}{$\epsilon=0.1$} & \multicolumn{1}{l|}{$\epsilon=0.2$} & \multicolumn{1}{l|}{$\epsilon=0.3$} & \multicolumn{1}{l|}{$\epsilon=0.4$} & \multicolumn{1}{l|}{$\epsilon=0.5$} \\ \hline
$k=5$  & 5                                   & 6                                   & 7                                   & 8                                   & 10                                  \\ \hline
$k=10$ & 11                                  & 12                                  & 14                                  & 15                                  & 16                                  \\ \hline
$k=15$ & 16                                  & 18                                  & 19                                  & 20                                  & 21                                  \\ \hline
$k=20$ & 22                                  & 24                                  & 24                                  & 25                                  & 27                                  \\ \hline
$k=25$ & 27                                  & 29                                  & 30                                  & 31                                  & 32                                  \\ \hline
$k=30$ & 30                                  & 34                                  & 35                                  & 36                                  & 38                                  \\ \hline
$k=35$ & 35                                  & 40                                  & 40                                  & 42                                  & 43                                  \\ \hline
$k=40$ & 42                                  & 46                                  & 48                                  & 49                                  & 51                                  \\ \hline
$k=45$ & 45                                  & 50                                  & 51                                  & 52                                  & 53                                  \\ \hline
$k=50$ & 50                                  & 56                                  & 58                                  & 59                                  & 61                                  \\ \hline
\end{tabular}
\end{table}

\begin{table}[!h]
\centering
\caption{The number of selected centers for our bicriteria algorithm on the Adult dataset. $\lambda$ is a parameter of the algorithm and denotes the amount of decrease in radii of balls around the clients in the iterative rounding algorithm.}
\label{table:adult-our-bi-num-cens}
\begin{tabular}{|l|r|r|r|r|r|}
\hline
       & \multicolumn{1}{l|}{$\lambda=0.1$} & \multicolumn{1}{l|}{$\lambda=0.2$} & \multicolumn{1}{l|}{$\lambda=0.3$} & \multicolumn{1}{l|}{$\lambda=0.4$} & \multicolumn{1}{l|}{$\lambda=0.5$} \\ \hline
$k=5$  & 7                                  & 7                                  & 7                                  & 5                                  & 7                                  \\ \hline
$k=10$ & 15                                 & 14                                 & 12                                 & 14                                 & 12                                 \\ \hline
$k=15$ & 18                                 & 15                                 & 18                                 & 14                                 & 19                                 \\ \hline
$k=20$ & 24                                 & 20                                 & 22                                 & 18                                 & 20                                 \\ \hline
$k=25$ & 30                                 & 27                                 & 27                                 & 25                                 & 28                                 \\ \hline
$k=30$ & 34                                 & 34                                 & 33                                 & 30                                 & 33                                 \\ \hline
$k=35$ & 38                                 & 38                                 & 38                                 & 35                                 & 38                                 \\ \hline
$k=40$ & 43                                 & 43                                 & 43                                 & 40                                 & 44                                 \\ \hline
$k=45$ & 45                                 & 47                                 & 47                                 & 45                                 & 46                                 \\ \hline
$k=50$ & 54                                 & 54                                 & 54                                 & 54                                 & 50                                 \\ \hline
\end{tabular}
\end{table}

\begin{table}[!h]
\centering
\caption{The number of selected centers for bicriteria algorithm of Abbasi-Bhaskara-Venkatasubramanian \cite{AbbasiBV21} on the Adult dataset. $\epsilon$ is a parameter of the algorithm. The maximum number of selected centers is $k/(1-\epsilon)$ which achieves a $2/\epsilon$ approximation factor.}
\label{table:adult-their-bi-num-cens}
\begin{tabular}{|l|r|r|r|r|r|}
\hline
       & \multicolumn{1}{l|}{$\epsilon=0.1$} & \multicolumn{1}{l|}{$\epsilon=0.2$} & \multicolumn{1}{l|}{$\epsilon=0.3$} & \multicolumn{1}{l|}{$\epsilon=0.4$} & \multicolumn{1}{l|}{$\epsilon=0.5$} \\ \hline
$k=5$  & 4                                   & 5                                   & 6                                   & 7                                   & 7                                   \\ \hline
$k=10$ & 6                                   & 7                                   & 10                                  & 11                                  & 12                                  \\ \hline
$k=15$ & 9                                   & 10                                  & 11                                  & 12                                  & 13                                  \\ \hline
$k=20$ & 13                                  & 13                                  & 14                                  & 15                                  & 17                                  \\ \hline
$k=25$ & 15                                  & 17                                  & 19                                  & 20                                  & 22                                  \\ \hline
$k=30$ & 18                                  & 19                                  & 21                                  & 23                                  & 28                                  \\ \hline
$k=35$ & 22                                  & 25                                  & 29                                  & 32                                  & 37                                  \\ \hline
$k=40$ & 30                                  & 37                                  & 39                                  & 45                                  & 50                                  \\ \hline
$k=45$ & 45                                  & 51                                  & 57                                  & 61                                  & 68                                  \\ \hline
$k=50$ & 45                                  & 50                                  & 56                                  & 61                                  & 68                                  \\ \hline
\end{tabular}
\end{table}
\end{document}